\newcommand{\eps}{\varepsilon}
\newtheorem{definition}{Definition}
\newtheorem{remark}{Remark}
\newtheorem{theorem}{Theorem}
\newtheorem{lemma}{Lemma}
\newtheorem{proposition}{Proposition}
\newtheorem{corollary}{Corollary}
\newcommand{\PROB}{\mathbb{P}}
\newcommand{\R}{\mathbb{R}}
\begin{document}


\begin{center}
\large \bf  Set estimation from reflected Brownian motion \normalsize
\end{center}
\normalsize

\

\begin{center}
 Alejandro Cholaquidis$^{a}$, Ricardo Fraiman$^{b}$,  G\'abor Lugosi$^{c}$ and Beatriz Pateiro-L\'opez$^{d}$\footnote{\textit{Corresponding author: } Departamento de Estad\'{\i}stica e Investigaci\'{o}n Operativa. Facultad de Matem\'aticas. Universidad de Santiago de Compostela. 15782. Santiago de Compostela. Spain. 

\noindent E-mail: beatriz.pateiro@usc.es} \\
  $^{a,b}$ Universidad de la Rep\'ublica, Uruguay\\
  $^{c}$  ICREA \& Universitat Pompeu Fabra, Spain\\
  $^{d}$ Universidad de Santiago de Compostela, Spain\\
\end{center}


\begin{abstract}
We study the problem of estimating a compact set $S\subset \mathbb{R}^d$ from a trajectory of a reflected Brownian motion in $S$
 with reflections on the boundary of $S$. 
We establish consistency and rates of convergence for various estimators
of $S$ and its boundary.
This problem has relevant applications in ecology in estimating the home range of an animal based on tracking data. There are a variety of studies on the habitat of animals that employ the notion of home range. 
 This paper offers theoretical foundations for a new methodology that, under fairly unrestrictive  shape assumptions, allows one to find flexible regions close to reality. The theoretical findings are illustrated on simulated and real data examples.
\end{abstract}

\noindent {\bf{Keywords:}} set estimation, home-range estimation; reflected Brownian motion.



\section{Introduction}\label{sec:intro}

{\textit{Set estimation}}  deals with the problem of approximating, in statistical terms, an unknown compact set $S\subset\mathbb{R}^d$. Most of the related literature assumes that the sampling information is given by  independent observations whose distribution is closely related to the set $S$. When it comes to set estimation methods, the emphasis in the existing literature is mostly
on the geometrical assumptions on $S$ and not on the sampling model. The extent to which a given set estimator efficiently reproduces the unknown set depends heavily on its geometry. Since the early work of \cite{renyi:63, renyi:64},  
significant effort has been made to enlarge the class of sets to estimate, propose efficient estimators, and analyze their asymptotic properties, see \cite{cuevas:09} for a survey. The best known estimator, introduced by \cite{cheva:76}, is simply the union of balls centered at the $n$ sample points of radius $\epsilon_n$.  \cite{dw:80} show that if $\epsilon_n \to 0$ and $n\epsilon_n^d \to \infty$, then the estimate is universally consistent with respect to the measure of the symmetric diference, see (\ref{dmu}) for a formal definition. More precisely, they show that the distance in measure converges to 0 in probability for all absolutely continuous distributions supported on $S$. Regarding estimates with geometric shape restrictions, the works of  \cite{walther:97}, \cite{rodriguez:07} for the class of $r$-convex sets or the more recent work of \cite{chola:14} for the class of {\mbox{$\rho$-cone-convex}} sets show the relevant role of the geometrical assumptions. Recall that a closed convex set can be characterized as the intersection  of all closed half spaces containing the set. An $r$-convex set is characterized as the intersection of complements of open balls of radius $r$  that do not intersect the set.  This condition implies the outside $r$-rolling condition  according to which, informally,  at every point of the boundary one can place an open ball included in the complement of the set. In both cases, letting $r\to\infty$, one obtains the convex sets. However, the class of $r$-convex sets is much more flexible, allowing also holes and smooth inlets in the set, see Figure \ref{fig:rconvex}. In terms of set estimation, if a set is assumed to be $r$-convex, then it can be efficiently approximated from the $r$-convex hull of a random sample of points taken into it, see \cite{rodriguez:07}. 

\begin{figure}[!h]%
	\centering
	\includegraphics[scale=0.4]{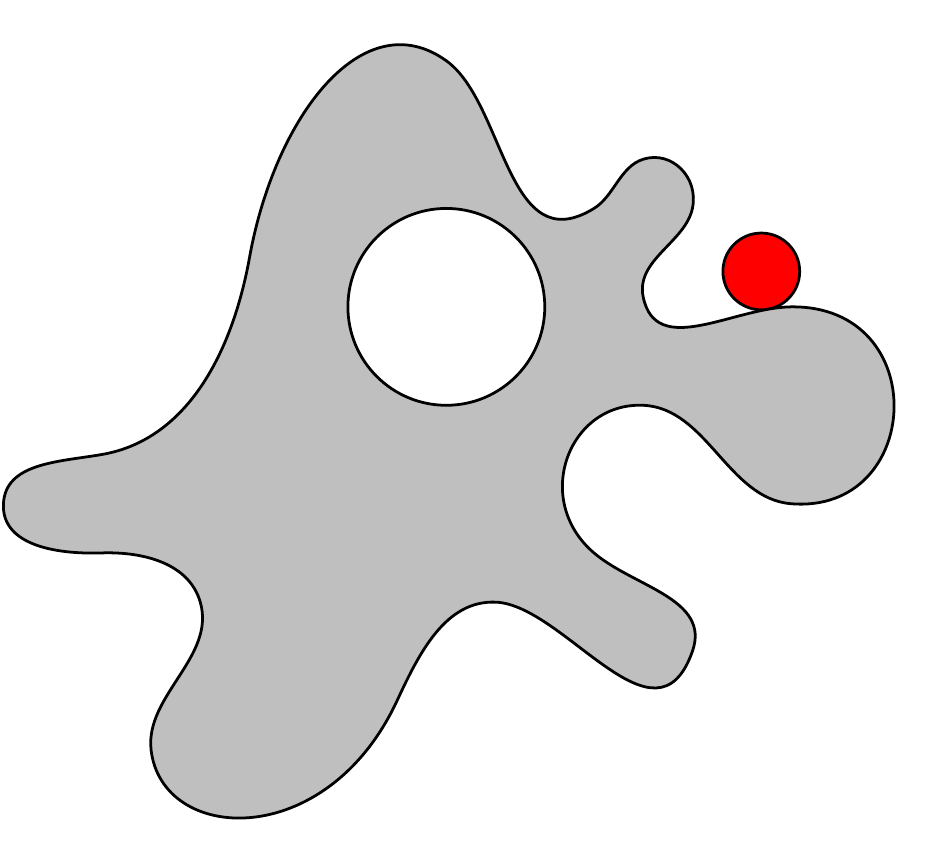} 
		\hspace{1cm}
			\includegraphics[scale=1.2]{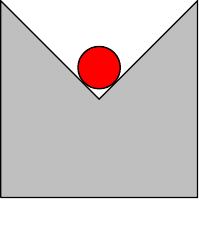}
	\caption{On the left, an example of an $r$-convex set with inlets and holes. The value of $r$ corresponds to the radius of the largest ball that can roll outside the set. 
The set on the right is neither convex  nor $r$-convex for any value of $r$.
	}%
	\label{fig:rconvex}%
\end{figure}

The R-package \texttt{alphahull}, described in \cite{pateiro:10}, provides a practical implementation
of the $r$-convex hull estimator for the i.i.d.\ case in dimension $d = 2$. In this work, we are concerned with another aspect of set estimation that has been less studied. We are interested in the problem of estimating an unknown set $S$ from a trajectory of a stochastic process that lives in the set. Up to our knowledge there are no results in the set estimation literature for this framework. 
We work with the model of  {\it{reflected Brownian motion}}. While it is an admittedly simplistic model for many applications, it offers a general, rigorous, and well understood framework. Possible extensions and generalizations are discussed in Section \ref{sec:diffusion} below.

In Section \ref{setup}, we discuss conditions for the existence of the reflected Brownian motion and its stationary uniform distribution and establish connections between these probabilistic conditions and geometric constraints on its support. In Section \ref{cons} we prove consistency of several estimates of $S$ based on a trajectory of a reflected Brownian motion (RBM). We also describe geometric conditions that ensure consistency.
In particular, we introduce and study the behaviour of two estimates, the $r$-convex hull of a trajectory and the so-called {\mbox{RBM-sausage}}. 
These estimates may be considered as analogues of the $r$-convex hull of a random sample and the \cite{dw:80} estimate in the i.i.d.\ case, respectively. The study of properties of the Brownian sausage goes back to 1933 (see \cite{kolmo:33}). This estimator is closely related to kernel density estimation (KDE) methods. Indeed, the estimate of \cite{dw:80} is nothing but the set where a kernel density estimator using the uniform kernel is positive. On the other hand, the $r$-convex hull provides a quite flexible family of estimators, with good asymptotic properties in the i.i.d.\ case.
In Section \ref{rates} we obtain upper bounds for the rates of convergence.

In Section \ref{sec:diffusion} we generalize the stochastic model generating the 
observed trajectories. In particular, we consider the general class of {\it{reflected
diffusions}}, a class of stochastic processes that include reflected Brownian
motion as a special case. This class allows one to deal with processes with
non-uniform stationary distribution on the support, an important aspect of some of the 
applications in home-range estimation.
We show that both estimators considered for reflected Brownian motion 
are well defined and consistent for any kind of reflected diffusion under mild conditions. However, 
obtaining rates of convergence remain 
a challenge in the general case.
In Section \ref{coments} we describe the simulation and real-data studies that
illustrate the behavior of the set estimation methods described in the paper.
The code used in the paper is available in a new release of the R package \texttt{alphahull 2.0}.  
Before introducing the formal framework, we
discuss the application of the proposed methodology in home-range estimation from animal tracking data, through real data examples.


Home-range estimation is a principal concern in animal ecology. Home range was first defined by \cite{burt:43} as ``the area traversed by the individual in its normal activities of food gathering, mating, and caring for young''. Since this first definition, the concept of home range has evolved, giving rise to a considerable amount of literature on the subject (reviews are given, for instance, by \cite{worton:87} and \cite{powell:00}).
The home range of an animal is usually estimated from a set of locations collected over a period of time. A first approach was to estimate the animal's home range by means of the convex hull of the observed location points (\cite{hayne:49}). This  ``minimum-convex-polygon'' method has well-known shortcomings. A major drawback is that the estimated home range can include areas of land which are never used. This overestimation can be reduced with the use of more flexible estimators such as the $\alpha$-hull, see \cite{burgman:03}.
Other home-range estimation methods describe the animal's home range by the so-called utilization distribution (density function that describes the probability of finding the animal at a particular location).  Since their introduction by \cite{worton:89} in the context of home-range data, methods based on kernel density estimation 
have been widely used for estimating the utility distribution. We refer to \cite{seaman:96} for an evaluation of the performance of these methods. More recently, \cite{getz:04} and \cite{getz:07} proposed the  nonparametric kernel method ``local convex hull'' that estimates the utilization distribution from the union of local nearest-neighbour convex hulls.

These methods of home-range estimation generally treat the recorded locations as independent observations. However, the advances in animal tracking technology (VHF radio transmission, Argos system, GPS, etc.) have allowed one to almost continuously record the movements of animals. In this context, the independence of observations cannot be assumed and new 
mathematical models
are needed. 
Modelling the movement of an animal in its home range as a continuous stochastic process provides a more realistic framework in which tracking data can be analyzed.  Existing stochastic models for describing animal movement can be found in \cite{okubo:01}, \cite{preisler:04} and references therein.
Other relevant references include \cite{bor:08}, \cite{fry:08}, \cite{pat:08} and \cite{tang:10}. 
Some recent results consider more involved statistical problems in the home-range setup which are not covered by our proposal, which only deals with the  ``densely sampled data".
\cite{flem:15} consider the problem of home-range estimation when only a short trajectory is observed and propose a method called  AKDE (autocorrelated KDE)
that takes into account the autocorrelations to provide a bandwidth (typically much larger) to be used in KDE and predict future animal movement. \cite{buchi:12} consider the case where the trajectories 
are only observed  at a low sampling rate and analyse animal movement using the Brownian Bridge movement model (BBMM; \cite{horne:07}), which selects a Brownian Bridge trajectory between each pair of nearest 
points in time in the low sampling rate original trajectory. On the contrary, the densely sampled data corresponds to a high sampling rate. As mentioned in \cite{kie:10} ``the closer locations are in  time, as obtained using GPS technology, the closer locations are in space, and kernel estimators can estimate utilization distribution well without the need for Brownian Bridge''. As we show it in this paper, our proposal works well in this setup. \cite{ben:11} proposes to use movement-based kernel density estimation (MKDE) to estimate the utilization distribution using the circular bivariate Gaussian kernel, where the bandwidth varies for each data-point.

As an alternative approach, mechanistic models (see \cite{moor:06}, \cite{potts:14}) incorporate interaction behavior to characterise home ranges. For example, \cite{potts:13} (see also \cite{giu:11}) consider a model where several individuals interact in the home range.  ``Animals are modeled to move at random but constrained to roam within areas that do not contain scent of cospecifics". The scent persists for a limited amount of time. Otherwise the individuals perform a nearest neighbor random walk (NNRW) or a ballistic walk (BW).

\


Because of the fairly unrestrictive nature of the shape assumptions, the methods based on the  $r$-convex hull of a trajectory and the RBM-sausage can identify hard boundaries in the home range. Moreover, the estimation is based on a trajectory of a stochastic proccess, more in line with the recent literature on home-range estimation methods. It can be argued that the reflected Brownian motion is a simplified model for animal movement. One of the main limitations of this model
is that the stationary distribution is necessarily uniform over the domain, which may not be
a realistic assumption for animal movement as it does not allow one to contemplate
the notion of ``core area'' (the area where the animals spend most of the time).
A more general related model that incorporates non-uniform stationary distributions is
reflected Brownian motion with drift (\cite{kr:14}, \cite{hw:87}).
As an alternative, in Section \ref{sec:diffusion} we discuss the general model of reflected
diffusions, a class of stochastic processes that include reflected Brownian
motion as a special case. 

\

As mentioned before, advances in tracking technology 
have provided researchers the opportunity to obtain large amount of tracking data from a large variety of species with a high temporal resolution. Movebank  
is an online database that gives open access to animal movement data collected by researchers. 
As an illustration, we have considered data from the ``Dunn Ranch Bison Tracking Project''. Over the last years, the Nature Conservancy in Missouri ({http://www.nature.org/}) has been working on the restoration of the Dunn Ranch Prairie, located in northwest Missouri. A herd of bison was introduced onto the ranch in order to restore the prairie ecosystem. The bison roam across a large fenced area. In Figure \ref{fig:dunn} we show the movements of two bison with $n=9635$ (left), $n=19380$ (right) recorded positions. In red, we represent the boundary of one of the proposed estimators, the $r$-convex hull estimator of the trajectory, for $r=0.005$.

\begin{figure}[!h]%
	\centering
	\includegraphics[scale=0.4]{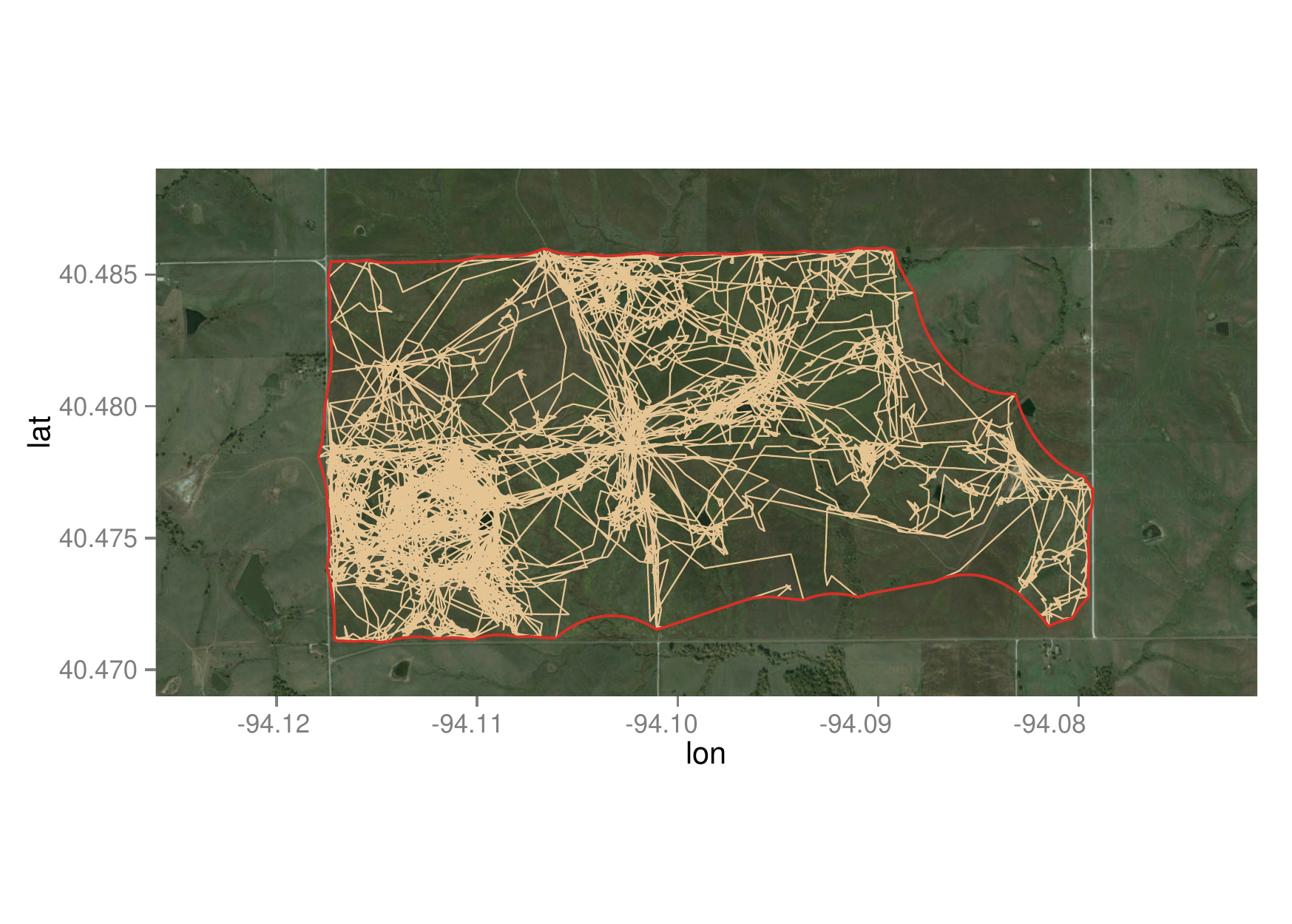} 
	\includegraphics[scale=0.4]{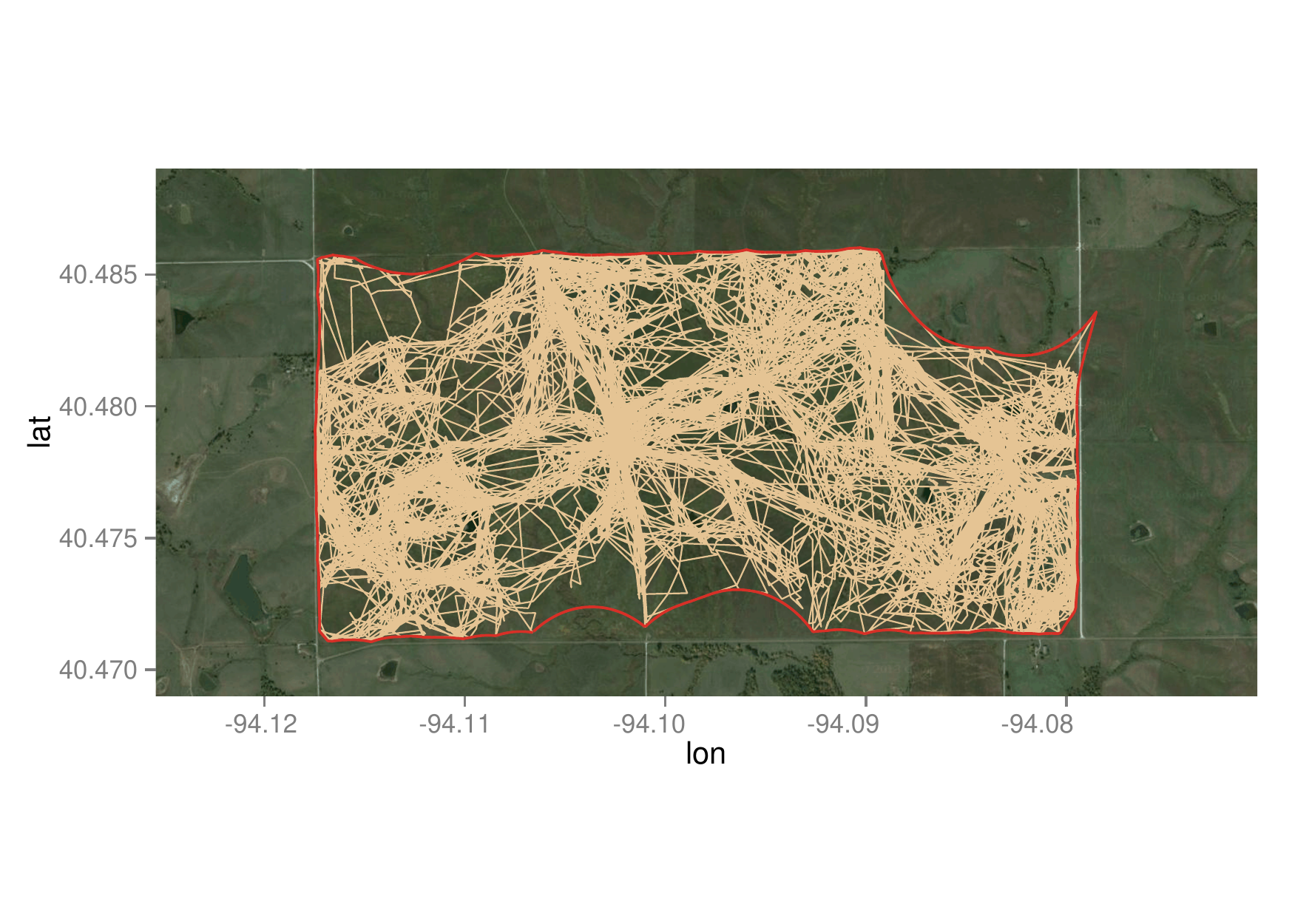}
	
	
	
	\caption{Movements of two bison in the Dunn Ranch Prairie with $n=9635$ (left), $n=19380$ (right) recorded positions. In red, boundary of the $r$-convex hull estimator for $r=0.005$.
	}%
	\label{fig:dunn}%
\end{figure}

For the bison with $n=9635$ recorded positions, we have computed the continuous version of the Devroye-Wise estimator (the reflected Brownian sausage), for different values of the smoothing parameter $\epsilon_T$, see Figure \ref{fig:dunn2}. A detailed discussion of these estimators is given in the next sections. In Section \ref{coments}, we analyse the behaviour of the $r$-convex hull estimator with respect to (i) how much the estimated home range differs when we observe the real movement to pass from time 0 to $T$, 0 to $2T$, etc. (short trajectories) and (ii) a variation on the discretization time (low sampling rate).

\begin{figure}[!h]%
	\centering
	\includegraphics[scale=0.4]{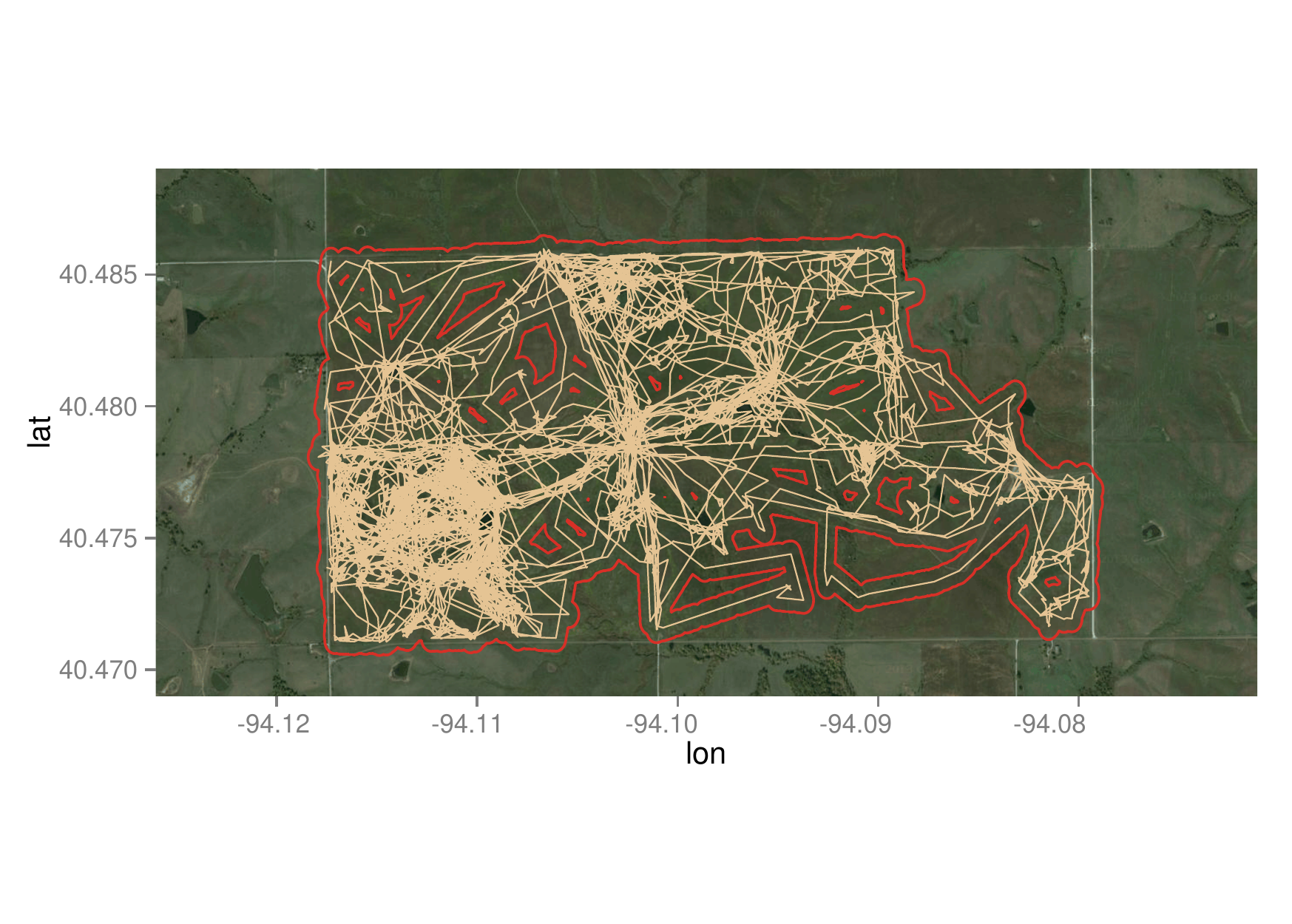} 
	\includegraphics[scale=0.4]{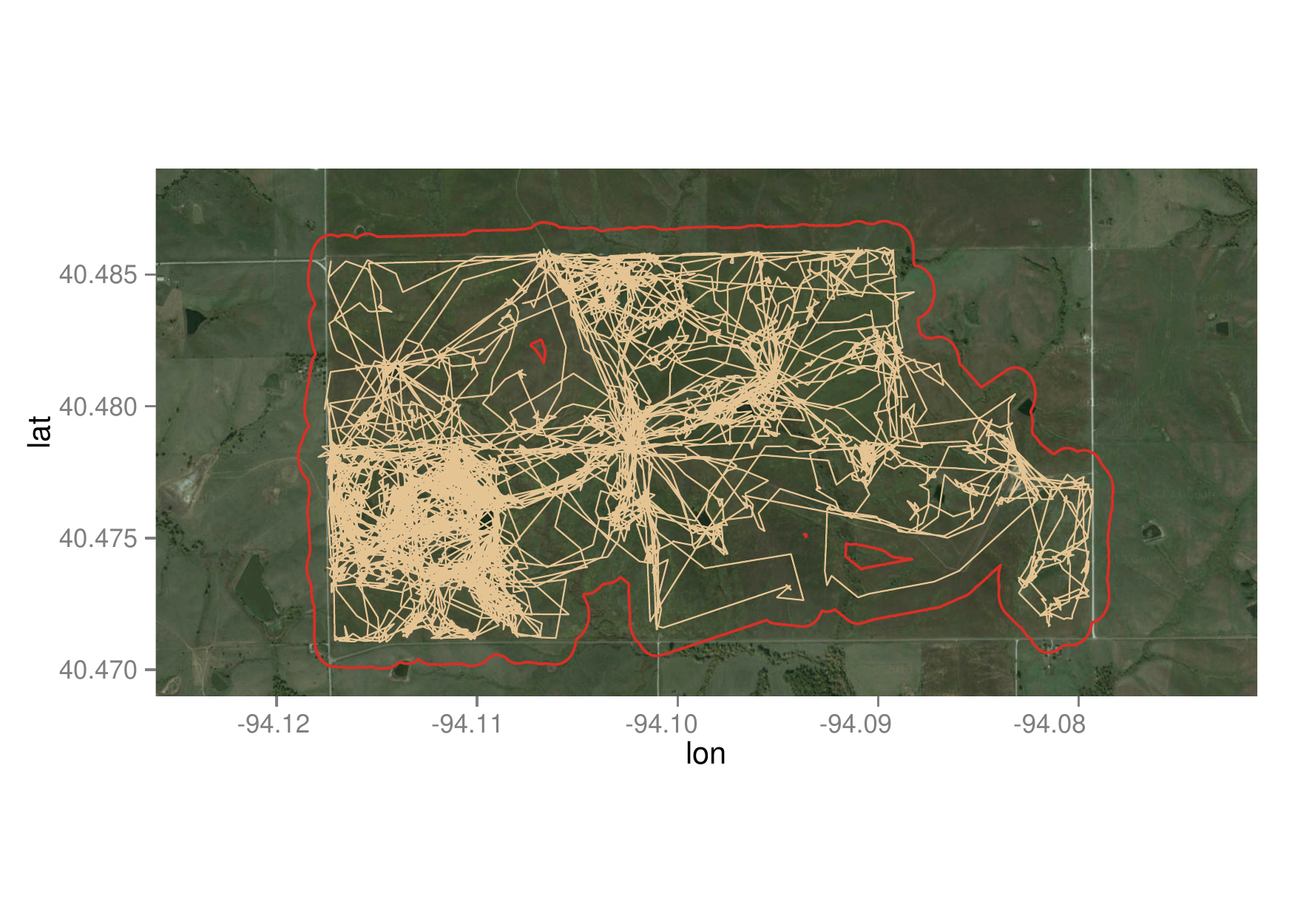}
	
	
	
	\caption{Movements of one bison in the Dunn Ranch Prairie with $n=9635$ recorded positions. In red, boundary of the reflected Brownian sausage $D_T$ for $\epsilon_T=0.0005$ (left)  and $\epsilon_T=0.001$ (bottom).
	}%
	\label{fig:dunn2}%
\end{figure}
\

\section{Setup}\label{setup}

In this section we establish conditions for the existence of the reflected Brownian motion and its stationary uniform distribution, and study the connections between these conditions and some geometric constraints on its support. In particular, we analyze how the regularity required on the support is naturally related to  rolling-type properties, which are usual assumptions in set estimation. We first introduce some notation and basic definitions used throughout the manuscript.\\

{\textit{Notation and basic definitions.}} \\
Given a set $S$, we denote by $\partial S$, $\textnormal{int}(S)$, and $\overline{S}$ the boundary, interior, and closure of $S$, respectively. We denote by $\left\langle\cdot,\cdot \right\rangle$  the usual inner product in $\mathbb{R}^d$ and by $\left\|\cdot\right\|$ the Euclidean norm.

Let $\mathcal{B}(x,\epsilon)$ denote the closed ball of radius $\epsilon$ centred at $x$. 
The open ball is denoted by $\mathring{\mathcal{B}}(x,r)$.  Given a bounded set $A\subset \mathbb{R}^d$ and $\epsilon>0$, $B(A,\epsilon)$ denotes the parallel set $B(A,\epsilon)=\{x\in \mathbb{R}^d:\ d(x,A)\leq \epsilon\}$, where $d(x,A)=\inf\{\|x-a\|:\ a\in A\}$. 

Given $x\in \mathbb{R}^d$, a unit vector $\xi\in \mathbb{R}^d$, and $\rho\in (0,\pi/2]$, $C(x,\xi,\rho)$ denotes  the convex cone with vertex $x$, angle $\rho$, and orientation $\xi$, defined by 
\[C(x,\xi,\rho)=\big\{y\in\mathbb{R}^d:\ \left\langle y-x,\xi\right\rangle\geq \left\|y-x\right\|\cos\rho \big\}~.\]
The closed compact cone of radius $h$ is $C_h(x,\xi,\rho)=C(x,\xi,\rho)\cap \mathcal{B}(x,h)$.

The performance of a set estimator 
is usually evaluated through the Hausdorff distance (\ref{dh}) and the distance in measure (\ref{dmu}) given below. 
The distance in measure takes the mass of the symmetric difference into account while the Hausdorff distance measures the difference of the shapes.

Let $A,C\subset \mathbb{R}^d$ be non-empty and compact. The Hausdorff distance between $A$ and $C$ is defined as 
\begin{equation}\label{dh}
d_H(A,C)=\max\Big\{\max_{a\in A}d(a,C), \ \max_{c\in C}d(c,A)\Big\}.
\end{equation}

If $\mu$ is a Borel measure, the distance in measure between  $A$ and $C$ is defined as 
\begin{equation}\label{dmu}
d_\mu(A,C)=\mu(A\triangle C),
\end{equation}
where $\triangle$ denotes symmetric difference.

\medskip

{\textit{The reflected Brownian motion.}} 

Let $D$ be a domain in $\mathbb{R}^d$ (that is, a connected and open set) with closure $\overline{D}$ and boundary  $\partial D$. We are concerned with the problem of existence and uniqueness of solution for a reflected stochastic differential equation on the domain $D$. 
This problem has been discussed by \cite{tanaka:79} when $D$ is a convex domain and by \cite{lions:84} and \cite{saisho:87} when $D$ is a general domain satisfying certain regularity conditions. 
In particular, \cite{saisho:87} proved that if $D$ satisfies the {\it{uniform exterior sphere}} condition and the {\it{uniform interior cone}} condition (formalized in Definitions \ref{def:A} and \ref{intcone2} below), 
then there exists a unique strong solution of the Skorohod stochastic differential equation
\begin{equation}\label{sde}
X_t=X_0+B_t+\int_{0}^t\eta(X_s)dL_s,\ t\geq 0,
\end{equation}
where $B_t$ is a $d$-dimensional Brownian motion, $\eta$ denotes the inward unit vector on the boundary $\partial D$ and $L$ is a continuous nondecreasing process with $L_0=0$ and
\[L_t=\int_0^t\mathbb{I}_{\{X_s\in\partial D\}}dL_s.\]
Roughly speaking, the process behaves in the interior of the set like an ordinary Brownian motion and reflects at the boundary.

\medskip

Following the notation by \cite{saisho:87}, for $x\in\partial D$, let
\[\mathcal{N}_x=\bigcup_{r>0}\mathcal{N}_{x,r},\]
\begin{equation*} \label{unifsph}
\mathcal{N}_{x,r}=\big\{\eta\in \mathbb{R}^d:\ \left\|\eta\right\|=1,\ \mathring{\mathcal{B}}(x-r\eta,r)\cap D=\emptyset\big\}.
\end{equation*}

\begin{definition}\label{def:A} 
The domain $D$ satisfies the  uniform exterior sphere condition if there exists a constant $r_0>0$ such that, for any $x\in\partial D$,
\begin{equation} \label{cond1}
\mathcal{N}_{x}=\mathcal{N}_{x,r_0}\neq\emptyset. 
\end{equation}
\end{definition}

For each $x\in \partial D$, the family of sets $\mathcal{N}_{x,r}$ is decreasing with respect to $r$. Condition \eqref{cond1} means that there exists $r_0$ such that for all $x\in \partial D$, taking $r\leq r_0$ does not add any new direction in $\mathcal{N}_{x,r}$. It is not easy to characterize geometrically this condition. However we prove below that the family of sets that satisfy \eqref{cond1} is between two well-known classes: it is contained in the class of $r_0$-convex sets (see \cite{bram12}), and contains the class of sets that satisfies the outside and inside $r_0$-rolling condition (stated in the proof of Proposition \ref{rollthennontrap}).

\begin{definition}\label{intcone2}The domain $D$ satisfies the uniform interior cone condition if there exist $\delta>0$ and $\rho\in (0,\pi/2]$  such that for any $x\in\partial D$ there exists a unit vector $l_x$ with
\begin{equation} \label{cond2}
C(y,l_x,\rho)\cap \mathring{\mathcal{B}}(x,\delta)\subset \overline{D}, \quad \forall y\in \mathring{\mathcal{B}}(x,\delta)\cap \partial D.
\end{equation}

\end{definition}

In \cite{bram12} it is shown that this condition is equivalent to the domain being Lipschitz (see for instance Definition 9 in \cite{bram12}). From a geometric point of view this condition is related to the cone-convexity property introduced in \cite{chola:14}: a set $S\subset {\mathbb R}^d$ is  $\rho$-cone-convex, for some $\rho\in(0,\pi/2]$, if there exists $\delta>0$ such that for all $x\in\partial S$ there is an open cone, denoted $\mathring{C}(x,l_x,\rho)$, such that $\mathring{C}(x,l_x,\rho)\cap \mathring{\mathcal{B}}(x,\delta)\subset S^c$. Condition \eqref{cond2} is stronger than the $\rho$-cone-convexity property in the sense that it requires that the same direction $l_x$ works for a neighborhood of $x$.\\

\textit{The trap condition.} 

Apart from the uniform exterior sphere condition and the uniform interior cone condition, another important notion is that of a {\it{non-trap}} domain.
 Let $\mathcal{B}\subset D$ and consider the first hitting time of $\mathcal{B}$ by $X$, $T_\mathcal{B} = \inf\{t>0 : X_t \in \mathcal{B}\}$.

\begin{definition}\label{def:trap}
As defined by \cite{burdzy:06}, we say that $D$ is a trap domain 
for the stochastic process $X_t$
if 
there exists a closed ball  $\mathcal{B}\subset D$ with non-zero radius
such that
\begin{equation} \label{trapcond}
\sup_{x\in D}\mathbb{E}^xT_\mathcal{B}=\infty,
\end{equation}
where $\mathbb{E}^x$ denotes the expectation of the distribution of 
$X_t$ starting from $x$.  Otherwise $D$ is called a non--trap domain. 
\end{definition}

It is proved in Lemma 3.2 in \cite{burdzy:06} that if $X_t$ is a reflected Brownian motion in

 a connected open set $D$ with finite volume and $\mathcal{B}_1$ and $\mathcal{B}_2$ are closed non--degenerate balls in $D$, then
$\sup_{x\in D}\mathbb{E}^xT_{\mathcal{B}_1}<\infty$ if and only if $\sup_{x\in D}\mathbb{E}^xT_{\mathcal{B}_2}<\infty$.

\

The non--trap condition is related to the uniform ergodicity of reflected Brownian motion in $D$. Indeed, this is shown in the following proposition given in  \cite{burdzy:06}. Condition (iii) will be used in Section \ref{rates} to obtain the rates of convergence of the proposed estimators.

\begin{proposition} (\cite{burdzy:06}, Prop. 1.2)\label{prop:burdzy} Let $D \subset \mathbb R^d$ be a connected open set with finite volume, and denote by $\Pi_D$ the uniform probability measure in $D$. 
Let $X_t$ be the reflected Brownian motion in $D$.
Then the following are equivalent.
\begin{itemize}
\item [(i)] $D$ is a non-trap domain for $X_t$;
\item [(ii)] $\lim_{t \to \infty} \sup_{x \in D} \Vert \mathbb P^x(X_t \in \cdot) - \Pi_D \Vert _{TV} = 0$;
\item [(iii)] There are positive constants $\alpha$ and $\beta$ such that 
$$
\sup_{x \in D} \Vert \mathbb P^x (X_t \in \cdot ) - \Pi_D\Vert_{TV} \leq \beta e^{-\alpha t};
$$
\end{itemize}
where $\Vert  \mu \Vert_{TV}$ stands for the total variation norm of the measure $\mu$.
\end{proposition}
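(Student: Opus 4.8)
This is the ergodicity dichotomy of \cite{burdzy:06}, and the plan is to reduce it to two soft arguments plus one genuinely analytic input. Throughout I would write $P_tf(x)=\mathbb{E}^xf(X_t)$ for the transition semigroup and $\phi(t)=\sup_{x\in D}\|\mathbb{P}^x(X_t\in\cdot)-\Pi_D\|_{TV}$ for the worst-case distance to equilibrium. Since $D$ has finite volume and the reflected Brownian motion is the diffusion generated by $\tfrac12\Delta$ with reflecting (Neumann) boundary conditions, $\Pi_D$ is invariant and $P_t$ is a contraction on $L^2(\Pi_D)$; existence and uniqueness of $X_t$ are already guaranteed by the uniform exterior sphere and interior cone conditions. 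The implication $(\mathrm{iii})\Rightarrow(\mathrm{ii})$ is immediate because $\beta e^{-\alpha t}\to0$. For $(\mathrm{ii})\Rightarrow(\mathrm{iii})$ I would introduce the Dobrushin-type coefficient $\bar\phi(t)=\sup_{x,y\in D}\|\mathbb{P}^x(X_t\in\cdot)-\mathbb{P}^y(X_t\in\cdot)\|_{TV}$, which by invariance of $\Pi_D$ satisfies $\phi(t)\le\bar\phi(t)\le2\phi(t)$ and, by the Chapman--Kolmogorov relation, is submultiplicative: $\bar\phi(s+t)\le\bar\phi(s)\bar\phi(t)$. Since $(\mathrm{ii})$ forces $\phi(t)\to0$, one picks $t_0$ with $\bar\phi(t_0)<1$, iterates over multiples of $t_0$, and interpolates to obtain $\phi(t)\le\beta e^{-\alpha t}$ with $\alpha=-t_0^{-1}\log\bar\phi(t_0)$. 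This disposes of $(\mathrm{ii})\Leftrightarrow(\mathrm{iii})$ by purely abstract means, and it remains to tie the non-trap property to either one of them.

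For $(\mathrm{i})\Rightarrow(\mathrm{iii})$ I would combine a uniform hitting estimate with a minorization. Non-trap gives $M:=\sup_{x\in D}\mathbb{E}^xT_\mathcal{B}<\infty$ for a fixed closed ball $\mathcal{B}\subset D$, so Markov's inequality yields $\sup_{x\in D}\mathbb{P}^x(T_\mathcal{B}>T_1)\le\tfrac12$ with $T_1=2M$. The second ingredient is that the transition density of the reflected Brownian motion is continuous and strictly positive for positive times on a domain of this regularity, which furnishes $t_1>0$, $\gamma>0$ and a probability measure $\nu$ concentrated on $\mathcal{B}$ with $\mathbb{P}^y(X_{t_1}\in\cdot)\ge\gamma\,\nu(\cdot)$ for all $y\in\mathcal{B}$. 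Composing the uniform hitting bound with this local minorization produces a full Doeblin condition $\mathbb{P}^x(X_{t_2}\in\cdot)\ge\epsilon\,\nu(\cdot)$ for all $x\in D$ and suitable $t_2,\epsilon>0$, whence Doeblin's theorem delivers uniform geometric convergence to the unique invariant law $\Pi_D$, which is exactly $(\mathrm{iii})$.

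Conversely, for $(\mathrm{iii})\Rightarrow(\mathrm{i})$ the exponential bound gives $\mathbb{P}^x(X_t\in\mathcal{B})\ge\Pi_D(\mathcal{B})-\beta e^{-\alpha t}$, so for $t$ beyond some $t^\ast$ this probability exceeds $c:=\Pi_D(\mathcal{B})/2>0$ uniformly in $x$. Splitting time into blocks of length $t^\ast$ and applying the Markov property, $T_\mathcal{B}$ is stochastically dominated by $t^\ast$ times a geometric variable of parameter $c$, so $\sup_{x\in D}\mathbb{E}^xT_\mathcal{B}<\infty$ and $D$ is non-trap; Lemma 3.2 of \cite{burdzy:06} ensures the particular ball is immaterial. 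This closes the cycle $(\mathrm{i})\Leftrightarrow(\mathrm{iii})\Leftrightarrow(\mathrm{ii})$.

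I expect the only real obstacle to be the minorization step in $(\mathrm{i})\Rightarrow(\mathrm{iii})$. The submultiplicativity and hitting-time domination arguments are soft and use nothing beyond invariance and the Markov property, whereas the uniform lower bound on the transition density is precisely where the regularity of $\partial D$ must be invoked. Making that bound rigorous up to the reflecting boundary — typically via a parabolic Harnack inequality or explicit Neumann heat-kernel estimates valid under the uniform exterior sphere and interior cone conditions — is the technical heart of the argument.
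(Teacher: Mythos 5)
The paper does not prove this proposition: it is quoted verbatim from \cite{burdzy:06} (their Proposition 1.2), so there is no in-paper argument to compare yours against. Judged on its own terms, your outline is the standard route and close in spirit to the one in the cited source: (iii)$\Rightarrow$(ii) is trivial, (ii)$\Rightarrow$(iii) follows from submultiplicativity of the Dobrushin coefficient together with $\phi\le\bar\phi\le 2\phi$, (iii)$\Rightarrow$(i) from Markov's inequality and geometric trials, and (i)$\Rightarrow$(iii) from a Doeblin minorization seeded by the uniform hitting-time bound. Two corrections are worth recording. First, you import hypotheses the statement does not have: the proposition is asserted for an arbitrary connected open set of finite volume, where the reflected Brownian motion is the $\mu$-symmetric process constructed via the Dirichlet form on $W^{1,2}(D)$, not via the Skorohod equation under the uniform exterior sphere and interior cone conditions. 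Correspondingly, the minorization step you single out as the technical heart does not require any boundary regularity, Neumann heat-kernel estimates, or a boundary Harnack inequality: since $\mathcal{B}$ is compactly contained in $D$, one minorizes $\mathbb{P}^y(X_{t_1}\in\cdot)$ for $y\in\mathcal{B}$ by the law of Brownian motion killed on exiting a slightly larger ball $\mathcal{B}'\subset D$, on which the reflected process coincides with free Brownian motion; the killed heat kernel is explicitly bounded below there. So the hard-looking ingredient is an interior estimate, available in full generality.

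The one place where your sketch genuinely glosses over a step is the phrase ``composing the uniform hitting bound with this local minorization produces a full Doeblin condition.'' The hitting time $T_\mathcal{B}$ is random, so to land a minorization at a \emph{fixed} time $t_2$ you need the bound $\mathbb{P}^y(X_s\in\cdot)\ge\gamma\nu$ to hold uniformly for $y\in\mathcal{B}$ and $s$ ranging over a window of length $T_1$, or else you must first pass to a discrete skeleton $(X_{n\delta})$ and show that hitting $\mathcal{B}$ at a random time forces presence in a slightly larger ball at the next grid time with probability bounded below (again an interior estimate). This is standard and fixable, but as written the composition is not yet a proof; it is the step I would ask you to write out.
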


\medskip

Consider a non-empty compact set $S\subset \mathbb{R}^d$ with connected interior. We show in Proposition \ref{Bthennontrap} below that, if $S=\overline{\textnormal{int}(S)}$ and $\textnormal{int}(S)$ satisfies the uniform interior cone condition, then $\textnormal{int}(S)$  is non-trap. To formalize the argument of the proof, we need the following definitions.


\begin{definition} \label{intcone1} A bounded domain $D$ satisfies the interior cone condition if there exists $\rho\in (0,\pi/2]$ and $h>0$ such that for all $x\in D$ there exists a unit vector $\xi_x$, such that $C_h(x,\xi_x,\rho)\subset D$. \end{definition}


\begin{proposition} \label{Bthennontrap} Consider a non-empty compact set $S\subset \mathbb{R}^d$ with connected interior. Suppose that $S=\overline{\textnormal{int}(S)}$ and $\textnormal{int}(S)$ satisfies the uniform interior cone condition. Then $\textnormal{int}(S)$ is a \textit{non-trap} domain for the reflected Brownian motion $X_t$ in $\textnormal{int}(S)$.
\begin{proof} 
First, we prove that, under the stated assumptions, $\textnormal{int}(S)$ satisfies the interior cone condition given  in Definition \ref{intcone1}. Reasoning by contradiction, if $\textnormal{int}(S)$ does not satisfy the interior cone condition, there exists $x_n\in \textnormal{int}(S)$ , and two sequences $\rho_n>0$ and $h_n>0$ with $\rho_n,h_n\rightarrow 0$ such that
\begin{equation} \label{eq1prop} 
\forall \xi \ \text{ with } \left\|\xi\right\|=1, \quad  C_{h_n}(x_n,\xi,\rho_n)\cap \textnormal{int}(S)^c\neq \emptyset.
\end{equation}
As $S$ is compact, $\{x_n\}$ has a subsequential limit $x\in S$. We may assume, (taking a subsequence if necessary) that $x_n\rightarrow x$. As $h_n\rightarrow 0$ and (\ref{eq1prop}) holds, we have that $x\in \partial S$. Since $\textnormal{int}(S)$ satisfies the uniform interior cone condition, there  exist $\delta>0$, $\rho \in (0,\pi/2]$, and a unit vector $l_x$ such that (\ref{cond2}) holds. Let us take $n$ large enough such that $h_n<\delta/4$. Taking $\xi=-l_x$ in (\ref{eq1prop}), there exists $y_n\in \partial S$ and $y_n\in C_{h_n}(x_n,-l_x,\rho)\cap \textnormal{int}(S)^c$. As $h_n<\delta/4$ we have that $y_n\in \mathcal{B}(x,\delta/2)$. Again by the uniform interior cone condition, we have that $C(y_n,l_x,\rho)\cap \mathring{\mathcal{B}}(x,\delta)\subset S$ and $x_n\in C(y_n,l_x,\rho)$ (see Figure \ref{propimg}). We consider two cases. First, if $x_n \in \textnormal{int}(C(y_n,l_x,\rho))$, then $C_{h_n}(x_n,l_x,\rho)\subset  \textnormal{int}(C(y_n,l_x,\rho))\subset \textnormal{int}(S)$ which contradicts (\ref{eq1prop}). Now, if $x_n \in \partial C(y_n,l_x,\rho)$ (as in Figure \ref{propimg}) then we can take a unit vector $\nu$ such that $C_{h_n}(x_n,\nu,\rho)\subset C(y_n,l_x,\rho)$ and, as $x_n\in \textnormal{int}(S)$, $C_{h_n}(x_n,\nu,\rho) \subset \textnormal{int}(S)$ which, again,  contradicts (\ref{eq1prop}).
\begin{figure}[!h]%
\centering
\includegraphics[scale=2]{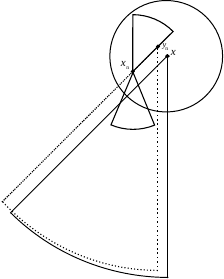}
\caption{In solid lines the ball $\mathcal{B}(x,\delta/4)$ and the cones $C_\delta(x,l_x,\rho)$, $C_{h_n}(x_n,-l_x,\rho)$ and $C_{h_n}(x_n,\nu,\rho)$. In dotted lines the cone $C_\delta(y_n,l_x,\rho)$.  }%
\label{propimg}%
\end{figure}

Now, by Lemma 7 in \cite{hajlasz:01}, if a domain satisfies the interior cone condition, then it is a John domain (i.e., there is a constant $C\geq  1$ and a distinguished point $x_0\in D$ such that each point $x\in D$ can be joined to $x_0$ by a curve $\gamma:[0,1]\rightarrow D$ such that $\gamma(0)=x$, $\gamma(1)=x_0$ and $d(\gamma(t),\partial D)\geq C^{-1}\|x-\gamma(t)\|$). Corollary 2.9 and Proposition 1.4 in \cite{burdzy:06} show that every John domain with finite volume is a non--trap domain. 
\end{proof}
\end{proposition}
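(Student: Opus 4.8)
The plan is to reduce the probabilistic non-trap property to a purely geometric regularity property, namely that $\textnormal{int}(S)$ is a John domain, and then to invoke the known implication that a John domain of finite volume is non-trap. The bridge from the hypothesis (the uniform interior cone condition of Definition \ref{intcone2}, which only controls boundary points) to the John property will go through the interior cone condition of Definition \ref{intcone1}, which demands a cone of fixed aperture and height at \emph{every} point of the domain. Note at the outset that $\textnormal{int}(S)$ has finite volume because $S$ is compact, so the finite-volume hypothesis needed at the end is automatic.

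First I would show that $\textnormal{int}(S)$ satisfies the interior cone condition. Away from the boundary this is free: if $x\in\textnormal{int}(S)$ has $d(x,\partial S)\geq h_0$ for a fixed $h_0$, then the ball $\mathring{\mathcal{B}}(x,h_0)$ meets neither $\partial S$ nor $\textnormal{int}(S)^c$ (being connected and containing $x$), so it lies in $\textnormal{int}(S)$ and accommodates a cone of fixed aperture and height. Hence any failure of the interior cone condition must occur along a sequence approaching $\partial S$, and I would argue by contradiction. If the condition fails, there are points $x_n\in\textnormal{int}(S)$ and apertures and heights $\rho_n,h_n\to 0$ so that every cone $C_{h_n}(x_n,\xi,\rho_n)$ meets $\textnormal{int}(S)^c$; by compactness $x_n\to x$, and since $h_n\to 0$ the limit $x$ must lie on $\partial S$.

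Now I would bring in the uniform interior cone condition at $x$, which yields a fixed $\delta>0$, $\rho\in(0,\pi/2]$, and a direction $l_x$ valid for all boundary points in $\mathring{\mathcal{B}}(x,\delta)$. Testing the failure hypothesis with the direction $\xi=-l_x$ produces a point $y_n\in\partial S$ lying in $C_{h_n}(x_n,-l_x,\rho)\cap\textnormal{int}(S)^c$; for $n$ large (so that $h_n<\delta/4$) this $y_n$ sits in $\mathring{\mathcal{B}}(x,\delta)$, whence $C(y_n,l_x,\rho)\cap\mathring{\mathcal{B}}(x,\delta)\subset\overline{S}$. The geometric key is that $y_n\in C_{h_n}(x_n,-l_x,\rho)$ forces $x_n\in C(y_n,l_x,\rho)$, by the symmetry of the cone defining inequality. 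If $x_n$ lies in the interior of this large cone, a small cone of fixed aperture at $x_n$ in direction $l_x$ fits inside it and hence in $\textnormal{int}(S)$; if $x_n$ lies on its boundary, I would perturb the axis slightly to a direction $\nu$ so that $C_{h_n}(x_n,\nu,\rho)$ still fits inside $C(y_n,l_x,\rho)\subset\overline{S}$, and since $x_n\in\textnormal{int}(S)$ this small cone lies in $\textnormal{int}(S)$. Either way one contradicts the failure hypothesis, establishing the interior cone condition. With this in hand, Lemma 7 in \cite{hajlasz:01} gives that $\textnormal{int}(S)$ is a John domain, and Corollary 2.9 together with Proposition 1.4 in \cite{burdzy:06} give that a John domain of finite volume is non-trap, completing the proof. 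The main obstacle is precisely this boundary contradiction step: one must carefully track the position of $x_n$ relative to the fixed-aperture cone $C(y_n,l_x,\rho)$ and dispose of the boundary-of-cone case by a uniform perturbation of directions, exploiting that $\rho$ and $\delta$ are fixed (independent of $n$) while $h_n,\rho_n\to 0$.
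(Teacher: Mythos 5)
Your proposal follows essentially the same route as the paper's proof: a contradiction argument showing that the uniform interior cone condition (at the boundary) upgrades to the interior cone condition at every point of the domain, followed by the same two citations (Lemma 7 of Haj\l asz for the John property, and Corollary 2.9 with Proposition 1.4 of Burdzy et al.\ for John $\Rightarrow$ non-trap under finite volume). The key geometric steps --- testing the failure hypothesis in the direction $-l_x$, deducing $x_n\in C(y_n,l_x,\rho)$ by symmetry, and splitting into the interior/boundary-of-cone cases with a perturbed axis $\nu$ in the latter --- coincide with the paper's argument, so the proposal is correct and not materially different.
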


\medskip

\textit{Rolling-type conditions.} 
\
We have shown that if $S\subset \mathbb{R}^d$ is a non-empty compact set with with connected interior and $S=\overline{\textnormal{int}(S)}$ such that $\textnormal{int}(S)$  satisfies both the  uniform exterior sphere condition and the uniform interior cone condition, then $\textnormal{int}(S)$ is non-trap and there exists a unique strong solution of the Skorohod stochastic differential equation (\ref{sde}). Next we analyze how the required regularity on $\partial S$ is related to three well known rolling-type properties (positive reach, $r$-convexity and outside rolling condition). These rolling-type properties have been used in set estimation as shape restrictions that cover large families of sets (much larger than the family of convex sets). We refer to \cite{cuevas:12} for more details.

\medskip

Following the notation in \cite{federer:59}, let $\text{Unp}(S)$ be the set of points $x\in \mathbb{R}^d$ having a unique projection on $S$, denoted by $\xi_S(x)$. That is, for $x\in \text{Unp}(S)$, $\xi_S(x)$ is the unique point that achieves the minimum of $\|x-y\|$ for $y\in S$. We write $\delta_S(x)=\inf\{\|x-y\|:y\in S\}$.
\begin{definition} \label{reach} For $x\in S$, let {\it{reach}}$(S,x)=\sup\{r>0:\mathring{\mathcal{B}}(x,r)\subset Unp(S)\big\}$. The reach of $S$ is defined by
$$reach(S)=\inf\big\{reach(S,x):x\in S\big\},$$
and $S$ is said to be of positive reach if $reach(S)>0$.
\end{definition}

\begin{definition} \label{rconvexity} A set $S\subset \mathbb{R}^d$ is said to be $r$-convex, for $r>0$, if 
$$S=C_r(S),$$
where
$$C_r(S)=\bigcap_{\big\{ \mathring{\mathcal{B}}(x,r):\ \mathring{\mathcal{B}}(x,r)\cap S=\emptyset\big\}} \Big(\mathring{\mathcal{B}}(x,r)\Big)^c$$
is the $r$-convex hull of $S$.
\end{definition}

\begin{definition} \label{rolling} Let $S\subset \mathbb{R}^d$ be a closed set. A ball of radius $r$ is said to roll freely in $S$ if for each boundary point $s\in \partial S$ there exists some $x\in S$ such that $s\in \mathcal{B}(x,r)\subset S$. The set $S$ is said to satisfy the outside $r$-rolling condition if a ball of radius $r$ rolls freely in $\overline{S^c}$.
\end{definition}

The relationship between positive reach, $r$-convexity, and outside rolling condition is analyzed in \cite{cuevas:12}. It is proved that the class of sets with reach $r$ is included in the class of $r$--convex sets, which is included in the class of sets satisfying the outside $r$--rolling condition. The following result by  \cite{bram14} shows the relation of the uniform exterior sphere condition and the uniform interior cone condition to the rolling-type conditions.

\begin{lemma}(\cite{bram14}, Lemma A.3)  \label{ABimplyreach} Let $D$ be a bounded domain that satisfies the uniform exterior sphere condition and the uniform interior cone
condition. Then $\overline{D}$ has positive reach.
\end{lemma}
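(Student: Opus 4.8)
The plan is to prove the sharper quantitative statement that $reach(\overline{D})\geq r_0$, where $r_0$ is the constant furnished by the uniform exterior sphere condition (Definition \ref{def:A}). Since $D$ is a bounded domain, $\overline{D}$ is compact, and since $D$ is open one has $\partial\overline{D}\subseteq\partial D$; consequently every nearest-point projection of an exterior point onto $\overline{D}$ lands on $\partial D$, exactly where the exterior sphere condition is available. I would argue by contradiction: assuming $reach(\overline{D})=0$, the definition of reach through $\textnormal{Unp}(\overline{D})$ produces points $x_n\in\overline{D}$ with $reach(\overline{D},x_n)\to 0$, hence ``bad'' points $p_n\notin\overline{D}$ with $\delta_n:=\delta_{\overline{D}}(p_n)\to 0$ that each possess two distinct nearest points $a_n\neq b_n$, necessarily lying on $\partial D$. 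Passing to a subsequence, $a_n,b_n,p_n$ all converge to a common $x_0\in\partial D$.

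The heart of the argument is to upgrade the infinitesimally small osculating ball at $a_n$ to a ball of the fixed radius $r_0$. Writing $u_n=(p_n-a_n)/\delta_n$, the open ball $\mathring{\mathcal{B}}(p_n,\delta_n)=\mathring{\mathcal{B}}(a_n-\delta_n(-u_n),\delta_n)$ is disjoint from $\overline{D}$, hence from $D$, so $-u_n\in\mathcal{N}_{a_n,\delta_n}\subseteq\mathcal{N}_{a_n}$. By Definition \ref{def:A}, $\mathcal{N}_{a_n}=\mathcal{N}_{a_n,r_0}$, so the larger ball $\mathring{\mathcal{B}}(a_n+r_0u_n,r_0)$ is disjoint from $D$ and, being an open set disjoint from the open set $D$, also disjoint from $\overline{D}$. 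I would then trap $b_n$ inside this empty ball: with center $c_n=a_n+r_0u_n$ one has $\|p_n-c_n\|=r_0-\delta_n$, whence $\|b_n-c_n\|\leq\|b_n-p_n\|+\|p_n-c_n\|=\delta_n+(r_0-\delta_n)=r_0$. A strict inequality would put $b_n\in\mathring{\mathcal{B}}(c_n,r_0)\cap\overline{D}=\emptyset$, a contradiction; and equality forces $p_n$ to lie on the segment $[b_n,c_n]$, so that $(p_n-b_n)/\delta_n=u_n$ and therefore $b_n=p_n-\delta_n u_n=a_n$, contradicting $a_n\neq b_n$. Either way we reach a contradiction, proving $reach(\overline{D})\geq r_0>0$.

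Thus positive reach follows essentially from the uniform exterior sphere condition, which is precisely the assertion that every direction carrying an arbitrarily small osculating exterior ball at a boundary point in fact carries one of the \emph{uniform} radius $r_0$. The uniform interior cone condition enters to guarantee the regularity of the domain: by the discussion following Definition \ref{intcone2} it makes $\overline{D}$ a Lipschitz (in particular regular closed) domain, which ensures $\partial\overline{D}=\partial D$ and excludes lower-dimensional ``whiskers,'' so that the exterior sphere information, stated on $\partial D$, genuinely controls the full topological boundary of $\overline{D}$ that the projection argument sees. I expect the main obstacle to be exactly the scale upgrade in the second paragraph, namely converting the automatic existence of a vanishingly small empty ball at a nearest point into one of fixed radius; everything hinges on the identity $\mathcal{N}_x=\mathcal{N}_{x,r_0}$ holding uniformly in $x$, together with the (routine but necessary) transfer of disjointness from the open set $D$ to its closure $\overline{D}$.
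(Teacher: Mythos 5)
The paper does not prove this lemma at all: it is imported verbatim from Bramson, Burdzy and Kendall (Lemma A.3), so there is no internal argument to compare against. Your proof is a correct, self-contained replacement, and in fact sharper: it yields the explicit bound $\textnormal{reach}(\overline{D})\geq r_0$. The key step checks out -- for a point $p\notin\overline{D}$ with $\delta:=d(p,\overline{D})<r_0$ and nearest point $a$, the open ball $\mathring{\mathcal{B}}(p,\delta)$ is disjoint from $\overline{D}$, hence from $D$, so $-u\in\mathcal{N}_{a,\delta}\subseteq\mathcal{N}_a=\mathcal{N}_{a,r_0}$ by condition (\ref{cond1}), and the inflated ball $\mathring{\mathcal{B}}(a+r_0u,r_0)$, being open and disjoint from $D$, is disjoint from $\overline{D}$; the triangle-inequality trap for a second nearest point $b$ then works exactly as you wrote, with the equality case correctly forcing $b=a$. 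Two small remarks. First, the contradiction/subsequence scaffolding in your opening paragraph is unnecessary: the argument shows directly that every $p$ with $0<d(p,\overline{D})<r_0$ has a unique projection, whence $\mathring{\mathcal{B}}(x,r_0)\subseteq\textnormal{Unp}(\overline{D})$ for every $x\in\overline{D}$. Second, your closing claim that the uniform interior cone condition is needed to guarantee $\partial\overline{D}=\partial D$ overstates its role: your argument only uses the inclusion $\partial\overline{D}\subseteq\partial D$, which holds for every open set, so the cone hypothesis is in fact never invoked. That is not an error (an unused hypothesis is harmless), but it means your proof establishes the stronger statement that the uniform exterior sphere condition alone, as formulated in Definition \ref{def:A}, already implies $\textnormal{reach}(\overline{D})\geq r_0$.
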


\begin{remark}
The relationship between the constant $r_0$ of the uniform exterior sphere condition, the angle $\rho$ of the uniform interior cone condition, and the value of $\textnormal{reach}(\overline{D})$ is discussed in  \cite{bram14}.
\end{remark}

\begin{remark}
A direct consequence of Lemma  \ref{ABimplyreach} and Propositions 1 and 2 in \cite{cuevas:12} is that if $S\subset \mathbb{R}^d$ is a non-empty compact set such that $\textnormal{int}(S)$  satisfies both the  uniform exterior sphere condition and the uniform interior cone condition, then $S$ satisfies the outside rolling condition. In fact, we only need to assume that $\textnormal{int}(S)$  satisfies the uniform exterior sphere condition with radius $r_0$ to prove that $S$ satisfies the outside $r_0$--rolling condition.
Note that, if $r_0$ is the uniform exterior ball radius, for any $s\in \partial S$,  $\mathcal{N}_{s,r_0}\neq \emptyset$. Then there exists $\eta:=\eta(s)\in \mathcal{N}_{s,r_0}$ such that $\mathring{\mathcal{B}}(s-r_0\eta,r_0)\cap \textnormal{int}(S)=\emptyset$. That is, $s\in \mathcal{B}(s-r_0\eta,r_0)\subset \overline{S^c}$ (a ball of radius $r_0$ rolls freely in $\overline{S^c}$).
The converse implication is not true. 
Consider the set $S=R\setminus\textnormal{int}(A)$, with $R=\left[-1.5,2.5\right]\times\left[-1.5,1.5\right]$ and $A=\mathcal{B}\big((0,0),1\big)\cup \mathcal{B}\big((1,0),1\big)$, see Figure \ref{rout}. It is clear that a ball of radius $r_0=1$ rolls freely in $\overline{S^c}$. We have that $x=(1/2,-\sqrt{3}/2)\in \partial S$. Note that $\mathcal{N}_x=\{\eta: \|\eta\|=1, \langle \eta,\xi\rangle\geq \cos(\pi/6)\}$, being $\xi=(0,-1)$, but $\mathcal{N}_{x,1}= \{v_1,v_2\}$ being $v_1=(-1/2,-\sqrt{3}/2)$ and $v_2=(1/2,-\sqrt{3}/2)$. Therefore, $\textnormal{int}(S)$  does not satisfy the uniform exterior sphere condition with $r_0=1$.
\end{remark}

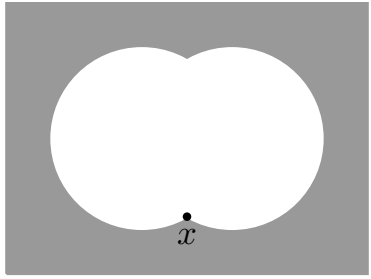
\begin{figure}[ht]
\begin{center}
\scalebox{1.2}{\begin{tikzpicture}[thick]
\draw[draw=gray!80!white,fill=gray!80!white] (-1.5,-1.5) -- (-1.5,1.5)-- (2.5,1.5)-- (2.5,-1.5) --(-1.5,-1.5);
\draw[fill=white,draw=white] (0,0) circle (1cm);
\draw[fill=white,draw=white] (1,0) circle (1cm);
\draw[color=black,fill=black] (0.5,-0.866) circle (0.2ex);
\node[below] at (0.5,-0.866) {$x$};
\end{tikzpicture}}
\caption{
In gray, $S=R\setminus\textnormal{int}(A)$, with $R=\left[-1.5,2.5\right]\times\left[-1.5,1.5\right]$ and $A=\mathcal{B}\big((0,0),1\big)\cup \mathcal{B}\big((1,0),1\big)$. The set $S$ satisfies the outside $r_0$--rolling condition for $r_0=1$ but $\textnormal{int}(S)$  does not satisfy the uniform exterior sphere condition for $r_0=1$.}
\label{rout}%
\end{center}
\end{figure}


\begin{proposition} \label{rollthennontrap} Let $S\subset \mathbb{R}^d$ be a non-empty compact set with connected interior satisfying $S=\overline{\textnormal{int}(S)}$. Suppose that
for some $r_0>0$, a ball of radius $r_0>0$ rolls freely in $S$ and $\overline{S^c}$. Then the reflected Brownian motion in $\textnormal{int}(S)$ exists and $\textnormal{int}(S)$ is non-trap.

\begin{proof} By Proposition \ref{Bthennontrap} it is enough to prove that $\textnormal{int}(S)$ satisfies the uniform exterior sphere condition and the uniform interior cone condition. We prove first that the uniform interior cone condition holds. By Theorem 1 of \cite{walther:99}, $\partial S$ is a $(d-1)$-dimensional $C^1$ submanifold, and the outward unit vector $\eta(x)$ in a point $x\in \partial S$ is Lipschitz. Then there exists $\delta_0>0$ such that for all $x,y\in \partial S$, if $\|x-y\|<\delta_0$, the angle between $\eta(x)$ and $\eta(y)$ is smaller than $\pi/6$. As a ball of radius $r_0$ rolls freely in $\overline{S^c}$ for every point $x\in \partial S$, $C_{r_0}(x,-\eta(x),\pi/3)\subset \mathcal{B}(x-\eta(x)r_0,r_0)\subset S$. If $y\in \mathring{\mathcal{B}}(x,\delta_0)\cap \partial S$, $C_r(y,-\eta(x),\pi/6)\subset C_r(y,-\eta(y), \pi/3)\subset \overline{S^c}$. Then, the uniform interior cone condition is satisfied with $\delta=\delta_0$ and $\rho=\pi/6$. To prove that the uniform exterior sphere condition is satisfied, observe that by Lemma 2.3 in \cite{pateiro:09}, for all $x\in \partial S$, $\mathcal{N}_{x,r_0}= \eta(x)$ and then $\mathcal{N}_{x,r_0}\neq \emptyset$. It remains to prove that $\mathcal{N}_x=\mathcal{N}_{x,r_0}$. Since $\mathcal{B}(x-r_0\eta(x),r_0)\subset S$ it follows that if for some $r$, $\mathcal{N}_{x,r}\neq \emptyset$ then $\mathcal{N}_{x,r}=\eta(x)$, and then $\mathcal{N}_x=\eta(x)$.
\end{proof}
\end{proposition}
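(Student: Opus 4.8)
The plan is to reduce the statement to the two regularity conditions of Section~\ref{setup}: I will show that the two-sided rolling hypothesis forces $\textnormal{int}(S)$ to satisfy both the uniform interior cone condition (Definition~\ref{intcone2}) and the uniform exterior sphere condition (Definition~\ref{def:A}). Once this is established, existence and uniqueness of the reflected Brownian motion follow from the result of \cite{saisho:87} quoted in Section~\ref{setup}, and the non-trap property follows from Proposition~\ref{Bthennontrap}, whose hypotheses ($S=\overline{\textnormal{int}(S)}$, which is assumed, together with the uniform interior cone condition, which I will verify) are then all met.

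First I would record the consequences of the rolling hypothesis for the boundary. Since a ball of radius $r_0$ rolls freely inside both $S$ and $\overline{S^c}$, there is a tangent ball of radius $r_0$ on each side of $\partial S$; by Theorem~1 of \cite{walther:99} this makes $\partial S$ a $(d-1)$-dimensional $C^1$ submanifold whose outward unit normal $\eta(\cdot)$ is Lipschitz. From the Lipschitz bound I extract a radius $\delta_0>0$ such that $\|x-y\|<\delta_0$ forces the angle between $\eta(x)$ and $\eta(y)$ to be at most $\pi/6$. This uniform control of the normal is exactly what lets a single cone direction serve an entire boundary neighbourhood.

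Next, to obtain the uniform interior cone condition, I fix $x\in\partial S$ and take $l_x=-\eta(x)$. The inside rolling ball at any boundary point $y$ gives $\mathcal{B}(y-r_0\eta(y),r_0)\subset S$, and a direct comparison of a truncated cone with a tangent ball shows that the cone $C_{r_0}(y,-\eta(y),\pi/3)$ lies inside that ball, hence in $S$. For $y\in\mathring{\mathcal{B}}(x,\delta_0)\cap\partial S$ the angle between $-\eta(x)$ and $-\eta(y)$ is at most $\pi/6$, so by the triangle inequality for angles $C(y,-\eta(x),\pi/6)\subset C(y,-\eta(y),\pi/3)$. Combining these inclusions, and choosing $\delta\le\delta_0$ small enough relative to $r_0$ so that the relevant cone points stay within distance $r_0$ of $y$, yields $C(y,l_x,\pi/6)\cap\mathring{\mathcal{B}}(x,\delta)\subset S=\overline{\textnormal{int}(S)}$, which is precisely condition \eqref{cond2} with $\rho=\pi/6$.

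Finally, for the uniform exterior sphere condition I use the outside rolling ball together with Lemma~2.3 of \cite{pateiro:09} to identify $\mathcal{N}_{x,r_0}=\{\eta(x)\}$, so in particular $\mathcal{N}_{x,r_0}\neq\emptyset$. It then remains to verify $\mathcal{N}_x=\mathcal{N}_{x,r_0}$, i.e.\ that no extra outward direction survives for smaller radii. Here the inside rolling ball $\mathcal{B}(x-r_0\eta(x),r_0)\subset S$, tangent to $\partial S$ at $x$, is decisive: any exterior ball $\mathring{\mathcal{B}}(x-r\nu,r)$ disjoint from $\textnormal{int}(S)$ and tangent at $x$ must have its axis aligned with the normal line at $x$, forcing $\nu=\eta(x)$; hence $\mathcal{N}_{x,r}\subseteq\{\eta(x)\}$ for every $r>0$ and the union collapses onto $\mathcal{N}_{x,r_0}$, giving \eqref{cond1}. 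I expect the main obstacle to lie in making this last step and the cone-containment step fully rigorous, since both rest on quantitative comparisons between tangent balls and cones, and one must check that the chosen half-angles ($\pi/6$ versus $\pi/3$) and the radius $\delta_0$ produced by the Lipschitz normal genuinely fit together \emph{uniformly} over all of $\partial S$.
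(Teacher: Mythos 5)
Your proposal is correct and follows essentially the same route as the paper's proof: reduction to the uniform interior cone and uniform exterior sphere conditions via Proposition \ref{Bthennontrap}, Walther's theorem to get a $C^1$ boundary with Lipschitz normal, the $\pi/6$ versus $\pi/3$ cone comparison for the interior cone condition, and Lemma 2.3 of \cite{pateiro:09} plus the inner tangent ball to pin down $\mathcal{N}_x=\mathcal{N}_{x,r_0}=\{\eta(x)\}$. Your explicit attention to choosing the truncation radius $\delta$ uniformly is a minor point the paper glosses over, but the argument is the same.
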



\section{On the consistency of the estimators}\label{cons}

Denote by $W_T=\{X_t:0\leq t\leq T\}$ the trajectory of the reflected Brownian motion
in a domain $S$, up to time $T$. We prove the consistency of several estimates of $S$
based on observing $W_T$, both in terms of the Hausdorff distance and distance in 
measure. Theorem \ref{thcons} establishes the consistency of any estimate $S_T$ containing the trajectory,
under the condition that the estimate is contained within $S$ or within $B(S,\eps_T)$, for $\eps_T\rightarrow 0$. We may apply the result to two estimates in particular, the $r$-convex hull of a trajectory, $C_r(W_T)$, and the so-called RBM-sausage
$D_T=B(W_T, \epsilon_T)=\{x\in \R^d: \exists t\in [0,T] \ \text{such that} \ \|x-X_t\|\le \epsilon_T\}$.

\begin{theorem}\label{thcons} Let $S\subset \mathbb{R}^d$ be a compact set satisfying $S=\overline{\textnormal{int}(S)}$. Suppose that $\textnormal{int}(S)$ is connected and satisfies conditions (\ref{cond1}), (\ref{cond2}).
\begin{itemize}
	\item[i)] If $S_T$ is any set such that $W_T\subset S_T\subset S$ a.s.,
	then, with probability one
	\begin{equation*}
	d_H(S_T,S)\rightarrow 0\quad \text{as }t\rightarrow +\infty.
	\end{equation*}
	\item[ii)] The same result holds under the weaker condition $W_T\subset S_T\subset B(S,\eps_T)$ for some sequence $\eps_T\rightarrow 0$.
	\item[iii)] In particular, $C_r(W_T)$ and $D_T$ are consistent estimates since they satisfy the condition $W_T\subset S_T\subset B(S,\eps_T)$.
\end{itemize}

\begin{proof} $i)$ As $S_T\subset S$, $\forall \eps>0$ we have $S_T\subset B(S_T,\eps)\subset B(S,\eps)$ and we only need to prove that for all $\eps>0$, $S\subset B(S_T,\eps)$ a.s. for $T$ large enough. Reasoning by contradiction, suppose that there exists $\eps>0$ such that $\forall T>0$ there exists $x_T\in S$ but $x_T\notin B(S_T,\eps)$. Then $\mathcal{B}(x_T,\eps)\cap S_T=\emptyset$. As $S$ is compact, there exists $x\in S$ and a sequence $T_n \to \infty$ such that $x_{T_n}\rightarrow x$ as $n\rightarrow +\infty$. Clearly $\mathring{\mathcal{B}}(x,\eps)\cap S_T=\emptyset$ for all $T$. As $S=\overline{\textnormal{int}(S)}$, we can take $z\in \mathcal{B}(x,\eps/2)$ and $0<\delta<\eps/2$ such that $\mathcal{B}(z,\delta)\subset \mathring{\mathcal{B}}(x,\eps)\subset S$. Since $S$ satisfies the non-trap condition (\ref{trapcond}), with probability one, there exists a time $T_1$ such that $X_{T_1}\in \mathcal{B}(z,\delta)$, a contradiction.  The proof of ii) is similar. 
\end{proof}
\end{theorem}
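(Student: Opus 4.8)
\emph{Proof proposal.} The whole statement reduces to a single fact: almost surely the full trajectory $W_\infty=\bigcup_{T>0}W_T$ is dense in $S$. I would prove this first and then read off all three parts. Because $\textnormal{int}(S)$ satisfies the uniform interior cone condition, Proposition \ref{Bthennontrap} shows that it is a non-trap domain, so for every closed ball $\mathcal{B}\subset\textnormal{int}(S)$ one has $\sup_x\EXP^xT_{\mathcal{B}}<\infty$; in particular the first hitting time of $\mathcal{B}$ from the actual starting point is almost surely finite, i.e.\ the trajectory meets $\mathcal{B}$. Fixing a countable family $\{\mathcal{B}(q_k,\delta_k)\}$ of closed balls with rational data whose interiors form a basis of $\textnormal{int}(S)$, the event that $W_\infty$ misses $\mathcal{B}(q_k,\delta_k)$ is null for each $k$; a countable union of null events being null, almost surely $W_\infty$ meets every $\mathcal{B}(q_k,\delta_k)$, hence is dense in $\textnormal{int}(S)$ and, since $S=\overline{\textnormal{int}(S)}$, dense in $S$.

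Granting density, part (i) is immediate. As $W_T\subset S_T\subset S$, the inner term of the Hausdorff distance vanishes, so $d_H(S_T,S)=\max_{x\in S}d(x,S_T)\le\max_{x\in S}d(x,W_T)$. A compactness argument then finishes: cover $S$ by finitely many balls of radius $\eps/2$, each centre being approximated by some trajectory point at a finite time, so for $T$ large every point of $S$ lies within $\eps$ of $W_T$, giving $\max_{x\in S}d(x,W_T)\to0$. Alternatively, one may argue by contradiction: an $\eps$-ball around a limit point $x\in S$ of uncovered points $x_T$ would be avoided by the monotone trajectory for all $T$, but using $S=\overline{\textnormal{int}(S)}$ one inserts a closed ball $\mathcal{B}(z,\delta)\subset\mathring{\mathcal{B}}(x,\eps)\cap\textnormal{int}(S)$, and recurrence forces the trajectory into $\mathcal{B}(z,\delta)$, a contradiction.

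For part (ii) only the inner term changes: under $S_T\subset B(S,\eps_T)$ one has $\max_{a\in S_T}d(a,S)\le\eps_T\to0$, while the bound on $\max_{x\in S}d(x,S_T)$ coming from density is unaffected; taking the maximum of the two contributions still yields $d_H(S_T,S)\to0$. For part (iii) it remains to check the containments. The sausage $D_T=B(W_T,\eps_T)$ satisfies $W_T\subset D_T\subset B(S,\eps_T)$ trivially, because $W_T\subset S$. For the $r$-convex hull, $W_T\subset C_r(W_T)$ holds by definition, and the upper bound uses that the uniform exterior sphere condition with radius $r_0$ makes $S$ itself $r_0$-convex, hence $r$-convex for every $r\le r_0$; monotonicity of the $r$-convex hull then gives $C_r(W_T)\subset C_r(S)=S\subset B(S,\eps_T)$.

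The main obstacle is the recurrence-to-density step: promoting the pointwise statement ``each ball has an almost surely finite hitting time'' to the almost sure statement ``the trajectory is dense'' requires the reduction to a countable basis, so that the exceptional null sets can be combined, and it rests essentially on the non-trap property established earlier from the cone condition. The only genuinely geometric point in part (iii) is recognising that the hypotheses force $S$ to be $r$-convex, which is precisely what prevents $C_r(W_T)$ from protruding beyond $S$.
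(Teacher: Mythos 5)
Your proposal is correct and rests on the same engine as the paper's proof --- the non-trap property forces the trajectory to enter every fixed closed ball of $\textnormal{int}(S)$ in almost surely finite time --- but you organize it differently, and your organization is arguably tighter. The paper argues part (i) directly by contradiction: it extracts a limit point $x$ of the uncovered points $x_T$, inserts a ball $\mathcal{B}(z,\delta)\subset\mathring{\mathcal{B}}(x,\eps)\cap S$, and invokes the non-trap condition to force the trajectory into that ball. Since $x$, and hence $z$ and $\delta$, depend on the realized trajectory, applying the almost-sure hitting statement to this \emph{random} ball implicitly requires exactly the reduction you make explicit: fix a countable basis of closed balls with rational data, kill a null set for each, and conclude that $W_\infty$ is almost surely dense in $\textnormal{int}(S)$, hence in $S=\overline{\textnormal{int}(S)}$. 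Your ``density first, then deterministic covering'' decomposition buys a single clean almost-sure event on which all of (i), (ii), (iii) become nonprobabilistic exercises, and it also isolates what (ii) adds (the outer term $\max_{a\in S_T}d(a,S)\le\eps_T$) and what (iii) requires (the containments $W_T\subset C_r(W_T)\subset C_r(S)=S$ via $r\le r_0$ and monotonicity of the $r$-convex hull, and $W_T\subset D_T\subset B(S,\eps_T)$), neither of which the paper spells out. One small caveat: your claim that condition (\ref{cond1}) with radius $r_0$ makes $S$ itself $r_0$-convex is asserted in the paper's Section \ref{setup} (citing Bramson et al.), and it is the reason Theorem \ref{thm:rhullcons} restricts to $r\in(0,r_0)$; it is worth keeping that restriction explicit in your statement of (iii), since for $r>r_0$ the inclusion $C_r(W_T)\subset S$ can fail.
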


The next theorem establishes the consistency of the $r$-convex hull of the trajectory $W_t$ in terms of the distance in 
measure.
In order to use this estimate, one must know a value $r$ for which $S$ is guaranteed to be
$r$-convex. Below we discuss how one may avoid this condition by choosing $r$ in a
data-dependent manner (see Remark \ref{choiceofr}). 

\begin{theorem} 
\label{thm:rhullcons}
Let $S\subset \mathbb{R}^d$ be a compact set satisfying $S=\overline{\textnormal{int}(S)}$. Suppose that $\textnormal{int}(S)$ is connected and satisfies conditions (\ref{cond1}), (\ref{cond2}). Let $r\in (0,r_0)$ where $r_0$ is the constant given in (\ref{cond1}).
Then, with probability one,
\begin{equation*}
d_\mu(C_r(W_T),S)\rightarrow 0\quad \text{as }T\rightarrow +\infty,
\end{equation*}
where $\mu$ denotes the Lebesgue measure in $\mathbb{R}^d$.

\begin{proof} Observe that $\mu(\partial S)=0$. By Theorem 3 in \cite{cuevas:12} we have that $d_H\big(\partial C_r(W_T),\partial S\big)\rightarrow 0$, which implies that $d_\mu(C_r(W_T),S)\rightarrow 0\quad \text{as }T\rightarrow +\infty.$
\end{proof} 
\end{theorem}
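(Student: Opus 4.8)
The plan is to bootstrap the set-level Hausdorff convergence already supplied by Theorem~\ref{thcons} into convergence in measure, using the two-sided shape regularity shared by the estimator and the target. First I would record three structural facts. The hull $C_r(W_T)$ is $r$-convex by construction. The set $S$ is itself $r$-convex for every $r\le r_0$: it is $r_0$-convex (as observed after Definition~\ref{def:A}), and since the operator $C_r$ is monotone in $r$ one has $S\subset C_r(S)\subset C_{r_0}(S)=S$, whence $S=C_r(S)$. Finally, by Lemma~\ref{ABimplyreach} the set $S=\overline{\textnormal{int}(S)}$ has positive reach, so $\partial S$ is $(d-1)$-rectifiable with finite $(d-1)$-dimensional measure; in particular $\mu(\partial S)=0$, which is the statement the author begins with.

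Next I would invoke the Hausdorff convergence. Since $W_T\subset C_r(W_T)\subset B(S,\eps_T)$ with $\eps_T\to 0$ (Theorem~\ref{thcons}(iii)), Theorem~\ref{thcons} yields $d_H(C_r(W_T),S)\to 0$ almost surely. The decisive step, and the one I expect to be the main obstacle, is to promote this to convergence of the \emph{boundaries}, $d_H\big(\partial C_r(W_T),\partial S\big)\to 0$. This implication fails for generic Hausdorff-close sets, and it is exactly here that $r$-convexity is indispensable: both $C_r(W_T)$ and $S$ are $r$-convex (hence satisfy the outside $r$-rolling condition), and this two-sided control prevents the boundary of the estimate from hiding thin spikes or creases that the bulk Hausdorff distance cannot detect. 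I would obtain the boundary convergence from the structural result for rolling-type sets (Theorem~3 in~\cite{cuevas:12}), whose hypotheses are met by the $r$-convexity of both sets together with the set convergence just established.

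Finally I would translate boundary proximity into a measure bound. Setting $\eta_T:=d_H\big(C_r(W_T),S\big)+d_H\big(\partial C_r(W_T),\partial S\big)\to 0$, I claim $C_r(W_T)\triangle S\subset B(\partial S,\eta_T)$: if $x\in C_r(W_T)\setminus S$ then the nearest point of $S$ to $x$ lies on $\partial S$, so $d(x,\partial S)=d(x,S)\le d_H(C_r(W_T),S)$; and if $x\in S\setminus C_r(W_T)$ then likewise $d(x,\partial C_r(W_T))\le d_H(C_r(W_T),S)$, whence $d(x,\partial S)\le d_H(C_r(W_T),S)+d_H(\partial C_r(W_T),\partial S)=\eta_T$. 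Consequently
\[
d_\mu\big(C_r(W_T),S\big)=\mu\big(C_r(W_T)\triangle S\big)\le \mu\big(B(\partial S,\eta_T)\big).
\]
Since the sets $B(\partial S,\eta)$ decrease to $\partial S$ as $\eta\downarrow 0$ and $\mu(\partial S)=0$, continuity of the (finite) measure from above gives $\mu(B(\partial S,\eta_T))\to 0$; alternatively the Steiner tube formula valid for sets of positive reach gives the same conclusion with an explicit polynomial rate in $\eta_T$. This yields $d_\mu(C_r(W_T),S)\to 0$ almost surely. Beyond the almost-sure Hausdorff convergence furnished by Theorem~\ref{thcons}, the argument is entirely deterministic.
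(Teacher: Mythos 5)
Your argument is correct and follows the same route as the paper: the published proof is precisely the two-line observation that $\mu(\partial S)=0$ and that Theorem 3 of \cite{cuevas:12} upgrades the Hausdorff convergence from Theorem \ref{thcons} to Hausdorff convergence of the boundaries, which confines the symmetric difference to a vanishing neighbourhood of $\partial S$. You have simply made explicit the steps the paper leaves implicit (the $r$-convexity of $S$ for $r<r_0$, the inclusion $C_r(W_T)\triangle S\subset B(\partial S,\eta_T)$, and the continuity-from-above argument), all of which check out.
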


In the next result we prove that the surface area of the $r$-convex hull of the trajectory $W_T$ is a consistent estimate of the surface area of any $r$-convex set. 
To make the statement precise, we need the following definitions and a key result by \cite{federer:59}.

\begin{definition} A set $S$ satisfies the property of interior local connectivity if there exists $\alpha_0>0$ such that for all $\alpha\leq \alpha_0$ and for all $x\in S$, $int(\mathcal{B}(x,\alpha)\cap S)$ is a non-empty connected set.
\end{definition}

\begin{definition} \label{minkcont} The outer Minkowski content of $S\in \mathbb{R}^d$  is given by,
$$L_0(\partial S)=\lim_{\epsilon\rightarrow 0} \frac{\mu\big(B(S,\epsilon)\setminus S\big)}{\epsilon},$$
provided that the limit exists and it is finite.
\end{definition}

\begin{theorem}(\cite{federer:59}, Th. 5.6)  Let $S \subset \mathbb R^d$ be a compact set with $reach(S)>0$. Let $K$ be a Borel subset of $\mathbb R^d$. Then there exist unique Radon measures $\Phi_0(S,.), \ldots, \Phi_d(S, .)$ over $\mathbb R^d$ such that for $0 \leq \epsilon < reach(S)$,
$$
\mu(B(S, \epsilon) \cap \{x: \zeta_S(x) \in K\}) = \sum_{i=0}^d \epsilon^{d-i} b_{d-i} 
\Phi_{i}(S,K),
$$
where $b_0=1$, $\zeta_S(x)$ is the unique projection of $x$ on $S$ and, for $j\geq 1$, $b_j$ is the $j$\nobreakdash-dimensional measure of a unit ball in $\mathbb R^j$. 
\end{theorem}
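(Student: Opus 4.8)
The plan is to establish this Steiner-type tube formula by parametrising the one-sided tube $B(S,\epsilon)\setminus S$ through the unit normal bundle of $S$ and evaluating its volume with the area formula; the coefficients produced by that computation will be the curvature measures $\Phi_i(S,\cdot)$. Writing $R=\textnormal{reach}(S)$, I would first record the regularity that positive reach provides: on the open tube $U=\{x:0<\delta_S(x)<R\}$ every point has a unique nearest point, so $\xi_S$ is well defined and Lipschitz there, and $\delta_S$ is continuously differentiable with $\nabla\delta_S(x)=(x-\xi_S(x))/\delta_S(x)$, a unit vector. Splitting $B(S,\epsilon)\cap\xi_S^{-1}(K)$ into the part lying in $S$, which contributes $\mu(S\cap K)$, and the tube part, it remains to show that the tube contribution is a polynomial in $\epsilon$ of degree $d$.

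Next I would introduce the unit normal bundle $N(S)=\{(a,v):a\in\partial S,\ v\text{ a unit outer normal to }S\text{ at }a\}$, its base-point projection $\pi(a,v)=a$, and the normal map $F(a,v,t)=a+tv$, which carries $N(S)\times[0,\epsilon]$ onto $B(S,\epsilon)\setminus S$. The structural fact on which everything rests, and which I expect to be the main obstacle, is that $N(S)$ is $(d-1)$-rectifiable and carries, $\mathcal{H}^{d-1}$-almost everywhere, a generalised second fundamental form whose eigenvalues $\kappa_1,\dots,\kappa_{d-1}$ (the generalised principal curvatures) satisfy $\kappa_i\ge -1/R$. Positive reach is precisely the hypothesis that keeps focal points out of the tube; this simultaneously guarantees that each factor $1+t\kappa_i$ stays positive for $t<R$ and that $F(\cdot,\cdot,t)$ is injective for such $t$. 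Proving the rectifiability of $N(S)$ and the almost-everywhere existence of these curvatures is the genuinely hard part of the argument.

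Granting this structure, the area formula applied to $F$ produces the Jacobian $JF(a,v,t)=\prod_{i=1}^{d-1}(1+t\kappa_i(a,v))$, whence
\[ \mu\bigl(B(S,\epsilon)\cap\xi_S^{-1}(K)\bigr)=\mu(S\cap K)+\int_{N(S)\cap\pi^{-1}(K)}\int_0^\epsilon\prod_{i=1}^{d-1}\bigl(1+t\kappa_i\bigr)\,dt\,d\mathcal{H}^{d-1}. \]
Expanding the product into elementary symmetric polynomials $e_j(\kappa_1,\dots,\kappa_{d-1})$ and integrating $\int_0^\epsilon t^j\,dt=\epsilon^{j+1}/(j+1)$, the tube contribution becomes
\[ \sum_{j=0}^{d-1}\frac{\epsilon^{j+1}}{j+1}\int_{N(S)\cap\pi^{-1}(K)}e_j(\kappa)\,d\mathcal{H}^{d-1}, \]
a polynomial in $\epsilon$ of degree $d$. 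Reindexing by $i=d-1-j$ and dividing the coefficient of $\epsilon^{d-i}$ by the normalising constant $b_{d-i}$ defines $\Phi_i(S,K)$ for $0\le i\le d-1$, while $\Phi_d(S,K)=\mu(S\cap K)$; with these choices the volume equals $\sum_{i=0}^{d}\epsilon^{d-i}b_{d-i}\Phi_i(S,K)$ exactly.

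Finally, each $\Phi_i(S,\cdot)$ is a Radon measure, since it is the restriction to $\pi^{-1}(K)$ (respectively to $K$) of a fixed finite Borel measure on the compact bundle $N(S)$ (respectively of Lebesgue measure on $S$). Uniqueness is then immediate: if two families of measures both realise the tube volume as a polynomial in $\epsilon$ throughout $[0,R)$, then matching coefficients of equal powers of $\epsilon$ forces the corresponding $\Phi_i(S,\cdot)$ to coincide for every $i$. As noted, the delicate ingredients are the rectifiability and curvature analysis of the normal bundle together with the almost-everywhere injectivity of $F$ for $t<R$; both rest on the positive-reach hypothesis and are exactly what legitimises the Jacobian factorisation above.
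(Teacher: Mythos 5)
This statement is quoted from Federer (1959, Theorem 5.6); the paper gives no proof of it, so there is no in-paper argument to compare yours against. Judged on its own terms, your outline follows the modern route to the tube formula --- parametrize the one-sided tube by the unit normal bundle, apply the area formula, and expand the Jacobian in elementary symmetric functions of generalized principal curvatures --- which is essentially Z\"ahle's later integral-geometric representation of the curvature measures rather than Federer's original argument. Federer instead works with the level sets of the distance function: using the Lipschitz bound on $\xi_S$ inside the reach, he shows that the $(d-1)$-dimensional measure of $\{x:\delta_S(x)=r\}\cap\xi_S^{-1}(K)$ is a polynomial in $r$ and integrates in $r$ via the coarea formula, obtaining existence, uniqueness, and the Radon property of the $\Phi_i$ without ever introducing a rectifiable normal bundle or generalized curvatures.

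The substantive problem with your plan is that it assumes precisely the facts that carry the theorem's content: that $N(S)$ is $(d-1)$-rectifiable, that generalized principal curvatures $\kappa_i\geq -1/\textnormal{reach}(S)$ exist $\mathcal{H}^{d-1}$-almost everywhere, and that the normal map $F(\cdot,\cdot,t)$ is injective with Jacobian $\prod_{i}(1+t\kappa_i)$ for $t<\textnormal{reach}(S)$. You correctly flag these as the hard part, but none of them is proved or reduced to anything more elementary; for a general compact set of positive reach --- which need not have a $C^1$ boundary, may be lower-dimensional, and may have normal cones of varying dimension --- these structural facts are nontrivial and were historically derived \emph{from} Federer's polynomial expansion rather than used to establish it. As written the argument is therefore circular in spirit: it becomes a complete proof only if supplemented by an independent construction of the rectifiable normal bundle and its curvatures. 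The closing assertions are fine once the expansion is granted: a polynomial identity valid on all of $[0,\textnormal{reach}(S))$ determines its coefficients, which yields uniqueness, and each coefficient is the restriction of a fixed finite Borel measure, which yields the Radon property.
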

The measures $\Phi_j$ are the curvature measures associated with $S$, and in particular, $L_0(\partial S) = \Phi_{d-1}(S, \partial S)$. 

In what follows we show that the Minkowski content of the $r$-hull of a discretization of the trajectory $W_T$ provides a consistent estimate of the surface area of the set $S$ in the two-dimensional case. Note that the result does not imply that $L_0(\partial C_r(W_T))\rightarrow L_0(\partial S)$. 
To prove such a statement one would need that
$C_r(W_T)$ satisfies the  property of interior local connectivity, which is unclear to us.

\begin{theorem} \label{teomink} Let $S\subset \mathbb{R}^2$ be a compact set satisfying $S=\overline{\textnormal{int}(S)}$. Suppose that $\textnormal{int}(S)$ is connected and satisfies conditions (\ref{cond1}), (\ref{cond2}) and the interior local connectivity property. 
Let $A_{N(T)}\subset W_T$ be a sequence of subsets of finite cardinality such that $d_H(A_{N(T)},S)\rightarrow 0$ a.s.\ as $T\rightarrow \infty$. Then
\begin{equation} \label{minkconv}
L_0\big(\partial C_r(A_{N(T)})\big)\rightarrow L_0(\partial S) \quad a.s.
\end{equation}
\end{theorem}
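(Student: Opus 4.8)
The plan is to reduce the convergence of outer Minkowski contents to a convergence of boundaries, and then to upgrade boundary convergence to convergence of one-dimensional measures by exploiting the rigidity coming from the bounded curvature of $r$-convex boundaries. First I would record the basic inclusions. Since $\textnormal{int}(S)$ satisfies (\ref{cond1}) with radius $r_0$, the earlier remarks give that $S$ is $r_0$-convex, hence $r$-convex for every $r\in(0,r_0)$, so $C_r(S)=S$. Because $A_{N(T)}\subset W_T\subset S$, monotonicity of the $r$-convex hull yields $C_r(A_{N(T)})\subset C_r(S)=S$, while trivially $A_{N(T)}\subset C_r(A_{N(T)})$; combining these with $d_H(A_{N(T)},S)\to 0$ gives $d_H\big(C_r(A_{N(T)}),S\big)\to 0$ a.s., and then Theorem~3 in \cite{cuevas:12} upgrades this to
\[
d_H\big(\partial C_r(A_{N(T)}),\partial S\big)\to 0\qquad a.s.
\]
By Lemma~\ref{ABimplyreach}, $S$ has positive reach, so $L_0(\partial S)=\Phi_{d-1}(S,\partial S)$ is well defined through Federer's formula, and the smoothness result used in the proof of Proposition~\ref{rollthennontrap} shows $\partial S$ is a $C^{1,1}$ curve with curvature bounded by $1/\textnormal{reach}(S)$.

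Next I would exploit the explicit planar geometry of the finite $r$-convex hull. In $\mathbb{R}^2$, $\partial C_r(A_{N(T)})$ is a finite concatenation of circular arcs of radius $r$, hence rectifiable with curvature $1/r$ along its arcs (and only outward-convex corners, which contribute nothing at first order); in particular its outer Minkowski content equals its length $\mathcal{H}^1(\partial C_r(A_{N(T)}))$. The interior local connectivity of $S$, together with positive reach, guarantees that $\partial S$ is a disjoint union of finitely many simple closed curves possessing a tubular neighbourhood of width $<\textnormal{reach}(S)$ on which the metric projection $\xi_S$ is single valued and Lipschitz. For $T$ large the curve $\partial C_r(A_{N(T)})$ lies inside this tube, so $\xi_S$ restricts to a well-defined map of $\partial C_r(A_{N(T)})$ onto $\partial S$.

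With this in place the two inequalities are handled separately. For the lower bound, $\mu\big(S\setminus C_r(A_{N(T)})\big)\to 0$ (using $\mu(\partial S)=0$ and the boundary convergence above) yields $L^1$-convergence of the indicator functions, whence lower semicontinuity of perimeter gives $\liminf_T L_0(\partial C_r(A_{N(T)}))\ge L_0(\partial S)$, since for these regular sets outer Minkowski content, perimeter and $\mathcal{H}^1$ of the boundary all coincide. The reverse inequality is the substantial step: I would represent $\partial C_r(A_{N(T)})$ as a graph over $\partial S$ through $\xi_S$, with tangent directions tending uniformly to those of $\partial S$, and use the uniform curvature bound to prevent oscillations at scales below a fixed multiple of $r$, forcing $\mathcal{H}^1(\partial C_r(A_{N(T)}))\le(1+o(1))\,\mathcal{H}^1(\partial S)$.

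The main obstacle is precisely this upper bound: a priori a piecewise-circular curve can remain inside a thin tube while accumulating extra length by wiggling, and this must be ruled out. The decisive ingredients are the uniform curvature of $\partial C_r(A_{N(T)})$ (its arcs have radius exactly $r$) and the interior local connectivity of $S$, which prevents $\partial S$ — and hence, for large $T$, $\partial C_r(A_{N(T)})$ — from pinching or self-touching, so that the global parametrization over $\partial S$ is legitimate. Equivalently, one may phrase the whole argument through the continuity of Federer's curvature measures: if one establishes a uniform lower bound $\textnormal{reach}\big(C_r(A_{N(T)})\big)\ge\rho>0$ — which is exactly where interior local connectivity and $C_r(A_{N(T)})\to S$ are used to exclude near-pinches — then Hausdorff convergence of sets of uniformly positive reach forces weak convergence of $\Phi_{d-1}\big(C_r(A_{N(T)}),\cdot\big)$ to $\Phi_{d-1}(S,\cdot)$, and since all sets lie in a fixed compact region the total masses converge, giving exactly $L_0(\partial C_r(A_{N(T)}))\to L_0(\partial S)$.
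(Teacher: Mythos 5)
Your closing paragraph is, in substance, the paper's actual proof: establish a uniform positive lower bound on the reach of the estimators and then invoke Federer's Theorem 5.9 (continuity of the curvature measures $\Phi_{d-1}$ under Hausdorff convergence of sets of uniformly positive reach), identifying $L_0$ with $\Phi_{d-1}(\cdot,\partial\cdot)$ via Remark 5.8. The one ingredient you assert but do not supply --- the uniform reach bound, ``which is exactly where interior local connectivity \dots are used to exclude near-pinches'' --- is precisely what the paper isolates as Lemma \ref{lemmink}, an adaptation of Lemma 1 of \cite{cuevas:12}. Note two refinements in the paper's version that you miss: the reach bound is proved not for $C_r(A_{N(T)})$ itself but for $C_r(A_{N(T)})$ with its isolated points removed (the $r$-convex hull of a finite set can have isolated points, which make the reach zero; in the plane they contribute nothing to the outer Minkowski content, so discarding them is harmless), and no lower/upper bound splitting is needed once the curvature-measure machinery is in place.

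Your first, more hands-on route (lower semicontinuity of perimeter for the lower bound, a graph representation over $\partial S$ with curvature control for the upper bound) is a genuinely different and more elementary strategy, but as written it stops short exactly at the step you yourself flag as ``the substantial step'': ruling out extra length from wiggling inside a thin tube is not carried out, and it is essentially equivalent in difficulty to the uniform reach bound. It also leans on $\partial S$ being a $C^{1,1}$ curve ``by the smoothness result used in the proof of Proposition \ref{rollthennontrap}''; that result (Walther's theorem) requires a ball rolling freely \emph{inside} $S$ as well as outside, which is not implied by conditions (\ref{cond1})--(\ref{cond2}) --- Lemma \ref{ABimplyreach} only gives positive reach of $\overline{D}$, i.e.\ the outside rolling condition. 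So if you pursue that route you must replace the $C^{1,1}$ parametrization by an argument using positive reach only. As a conditional proof modulo the uniform reach lemma, your second formulation is correct and matches the paper.
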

\begin{proof}
The proof makes use of the following extension of Lemma 1 in \cite{cuevas:12}
\begin{lemma} \label{lemmink} Let $S\subset \mathbb{R}^2$ be a compact $r$-convex set satisfying $S=\overline{\textnormal{int}(S)}$ and the property of interior local connectivity.
If $A_N\subset S$ is a sequence of sets with finite cardinality such that $d_H(A_{N},S)\rightarrow 0$, then there exists $r_0>0$ such that
$\text{reach}\big(C_r(A_N)\setminus I(C_r(A_N))\big)>r_0$ for all $N$, where $I(C_r(A_N))$ is the set of isolated points of $C_r(A_N)$ (i.e., $I(C_r(A_N))=\{x\in C_r(A_N): B(x,\eta)\cap C_r(A_N)=x \text{ for some }\eta>0\}$).
\end{lemma}
The proof of Lemma \ref{lemmink} follows the same lines of that of Lemma 1 in \cite{cuevas:12} with minor changes, so we omit it.\

\

Now, from Lemma \ref{lemmink} we have that 
both the set
$\tilde{S}_{N(T)}= C_r\big(A_{N(T)}\big)\setminus I\big(C_r(A_{N(T)})\big)$ and $S$ have positive reach. Since we also have that $d_H(\tilde{S}_{N(T)},S)\rightarrow 0$ a.s., and
the assumptions of Theorem 5.9 in \cite{federer:59}
are satisfied (see also Remark 4.14 in that paper) the proof will be complete. Indeed, Theorem 5.9 in \cite{federer:59} establishes that the curvature measures are continuous with respect to $d_H$ (see Remark 5.10 in \cite{federer:59}). In particular we obtain that 
$\Phi_{d-1}(\tilde{S}_{N(T)},K)\rightarrow \Phi_{d-1}(S,K)$ for any closed ball $K$ such that $S\subset K$. Using Remark 5.8 in \cite{federer:59} and $\tilde{S}_{N(T)} \subset S$ we get that $\Phi_{d-1}(\tilde{S}_{N(T)},K)=\Phi_{d-1}(\tilde{S}_{N(T)},K\cap \partial \tilde{S}_{N(T)})=\Phi_{d-1}(\tilde{S}_{N(T)},\partial \tilde{S}_{N(T)})$ and also $\Phi_{d-1}(S,K)=\Phi_{d-1}(S,\partial S)$. The proof of (\ref{minkconv}) is concluded by noting that $L_0(\partial \tilde{S}_{N(T)})=\Phi_{d-1}(\tilde{S}_{N(T)},\partial \tilde{S}_{N(T)})$ and $L_0(\partial S)=\Phi_{d-1}(S,\partial S)$.
\end{proof}


\section{Rates of convergence}\label{rates}

In this section we establish upper bounds for the rates of convergence of the set estimates
discussed in the previous section (the $r$-convex hull of a trajectory and the RBM-sausage), both for the expected Hausdorff distance and for the expected distance in measure. 

Let $S\subset \mathbb{R}^d$ be a compact set satisfying $S=\overline{\textnormal{int}(S)}$. Suppose that $D=\textnormal{int}(S)$ is connected. Recall that, if $D$ is non-trap then, by Proposition \ref{prop:burdzy} (iii), there exist constants $\alpha, \beta >0$ such that
$$
\sup_{x \in D} \Vert \mathbb P^x (X_t \in \cdot ) - \Pi_D\Vert_{TV}
  \leq \beta e^{-\alpha t}~.
$$
These constants and the volume of the domain $S$ play a key role in
the following
estimate of the rates of convergence of the set estimate.

\begin{theorem}
\label{thm:rates}
Let $S\subset \mathbb{R}^d$ be a compact set such that $S=\overline{\textnormal{int}(S)}$ and $\textnormal{int}(S)$ is a non-trap domain. Let $\{X_t:t\ge 0\}$ be a reflected
Brownian motion in $S$. Let $S_T$ be any measurable set containing
the trajectory $W_T$ such that $S_T\subset S$.
Then for all $T>0$ and for $\epsilon< 2(2\beta \mu(S)/v_0)^{1/d}$
(where $v_0=\mu(\mathcal{B}(0,1))$
is the volume of the unit
ball in $\R^d$),
\[
\PROB\{d_H(S_T,S) > \epsilon\}
\le
\frac{(\epsilon/4)^{-d} \mu(S)}{v_0}
\exp\left(- T \frac{(\epsilon/2)^dv_0\alpha}{2\mu(S)\log \frac{2\beta \mu(S)}{v_0(\epsilon/2)^d} }\right)~.
\]
\end{theorem}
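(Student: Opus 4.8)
The plan is to reduce the event $\{d_H(S_T,S)>\epsilon\}$ to the event that the trajectory fails to visit some fixed small ball inside $S$, and then to control the probability that a given ball is missed by means of the uniform ergodicity in Proposition~\ref{prop:burdzy}(iii). First, since $W_T\subseteq S_T\subseteq S$, the two one-sided distances collapse and $d_H(S_T,S)=\sup_{x\in S}d(x,W_T)$; hence $\{d_H(S_T,S)>\epsilon\}$ forces the existence of a point $x^\ast\in S$ with $\mathring{\mathcal{B}}(x^\ast,\epsilon)\cap W_T=\emptyset$. Using $S=\overline{\textnormal{int}(S)}$ together with the positive reach of $S$ supplied by Lemma~\ref{ABimplyreach} (which provides interior balls of radius $\epsilon/2$ near any point once $\epsilon$ is small), I would locate inside this avoided region a full ball $\mathcal{B}(z,\epsilon/2)\subseteq S$ with $\mathcal{B}(z,\epsilon/2)\cap W_T=\emptyset$. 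Taking a maximal $(\epsilon/2)$-separated family of admissible centers $z_1,\dots,z_N$ (so that the balls $\mathcal{B}(z_i,\epsilon/4)$ are pairwise disjoint and contained in $S$) gives, by comparing volumes, $N\le (\epsilon/4)^{-d}\mu(S)/v_0$, and the event above is contained in $\bigcup_{i=1}^N\{\mathcal{B}(z_i,\epsilon/2)\cap W_T=\emptyset\}$. A union bound then reduces everything to a single-ball estimate.

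The core of the argument is to bound $\PROB\{\mathcal{B}\cap W_T=\emptyset\}$ for a fixed ball $\mathcal{B}=\mathcal{B}(z_i,\epsilon/2)\subseteq S$, whose stationary mass is $p:=\Pi_D(\mathcal{B})=v_0(\epsilon/2)^d/\mu(S)$ (using that $\mu(\partial S)=0$). By Proposition~\ref{prop:burdzy}(iii), for every starting point $y\in D$ and every $s>0$ we have $\PROB^y\{X_s\in\mathcal{B}\}\ge p-\beta e^{-\alpha s}$. Choosing $s=\tfrac1\alpha\log(2\beta/p)$, which is positive precisely because the hypothesis $\epsilon<2(2\beta\mu(S)/v_0)^{1/d}$ forces $p<2\beta$, gives $\beta e^{-\alpha s}=p/2$ and hence $\inf_{y\in D}\PROB^y\{X_s\in\mathcal{B}\}\ge p/2$. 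I would then sample the trajectory at the $k=\lfloor T/s\rfloor$ times $s,2s,\dots,ks$ and observe that $\{\mathcal{B}\cap W_T=\emptyset\}\subseteq\bigcap_{j=1}^k\{X_{js}\notin\mathcal{B}\}$. Conditioning successively and invoking the Markov property together with the uniform (in the starting point) lower bound $p/2$, each factor contributes at most $1-p/2$, so $\PROB\{\mathcal{B}\cap W_T=\emptyset\}\le(1-p/2)^k\le e^{-kp/2}$. Substituting $k\approx T/s=T\alpha/\log(2\beta/p)$ and $p=v_0(\epsilon/2)^d/\mu(S)$ reproduces exactly the exponential factor in the statement.

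Combining the union bound over the $N$ centers with the single-ball estimate then yields the claimed inequality. The main obstacle is this single-ball estimate: Proposition~\ref{prop:burdzy}(iii) controls only the one-dimensional marginals of $X_t$, so the passage to the probability of never entering $\mathcal{B}$ throughout $[0,T]$ must be effected through the Markov property, and it is essential that the bound there is uniform over all starting points, since otherwise the chaining of the $k$ time blocks would break down. A secondary technical point, where the precise constants $\epsilon/2$ and $\epsilon/4$ are pinned down, is the geometric reduction: extracting a ball of radius $\epsilon/2$ genuinely contained in $S$ within the avoided region, and arranging the centers into a packing whose cardinality is controlled by $\mu(S)$ rather than by $\mu(B(S,\epsilon/4))$. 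This is where the positive reach of $S$, rather than merely the identity $S=\overline{\textnormal{int}(S)}$, is invoked.
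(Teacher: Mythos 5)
Your probabilistic core is exactly the paper's argument: you choose the same inter-sampling time $s=\frac1\alpha\log(2\beta/p)$ (the paper's $\frac{1}{\alpha}\log(\beta/\delta)$ with $\delta=p/2$), chain the $k=\lfloor T/s\rfloor$ sampled times through the Markov property, bound each conditional factor by $1-p/2$ using precisely the uniformity over starting points supplied by Proposition~\ref{prop:burdzy}(iii), and close with a union bound over at most $(\epsilon/4)^{-d}\mu(S)/v_0$ centers. All of that is correct and reproduces the stated exponent.

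The gap is in the geometric reduction. You extract, inside the avoided region $\mathring{\mathcal{B}}(x^\ast,\epsilon)$ around an \emph{arbitrary} $x^\ast\in S$, a ball $\mathcal{B}(z,\epsilon/2)\subseteq S$, and you justify this by the positive reach of $S$ via Lemma~\ref{ABimplyreach}. There are two problems. First, Lemma~\ref{ABimplyreach} requires the uniform exterior sphere and uniform interior cone conditions, which are not hypotheses of Theorem~\ref{thm:rates}; the theorem assumes only $S=\overline{\textnormal{int}(S)}$ and the non-trap property, and under these alone a boundary point can sit in an arbitrarily thin cusp, so that $\mathring{\mathcal{B}}(x^\ast,\epsilon)\cap S$ need contain no ball of radius comparable to $\epsilon$. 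Second, even if positive reach were available it would not deliver what you need: reach controls outer regularity (unique projections from outside $S$) and says nothing about interior balls --- a segment in $\mathbb{R}^2$ has infinite reach and empty interior. The paper sidesteps this entirely by first bounding $\PROB\{d_H(S_T,S)>\epsilon\}$ by the probability that some point of the inner parallel set $S^{(\epsilon)}=\{x\in S:\ \mathcal{B}(x,\epsilon)\subseteq S\}$ has its $\epsilon$-ball unvisited; for covering centers taken in $S^{(\epsilon)}$ the $\epsilon/2$-balls lie in $S$ by definition, so the stationary mass $v_0(\epsilon/2)^d/\mu(S)$ is available with no shape assumption beyond those stated. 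Replacing your positive-reach step by this restriction to $S^{(\epsilon)}$ repairs the argument, and the remainder of your proof then goes through essentially verbatim.
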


\begin{proof}
Define $\delta= v_0(\epsilon/2)^d/(2\mu(S))$. Let
\begin{equation} \label{eqn}
  n= \left\lfloor \frac{T}{\frac{1}{\alpha}\log \frac{\beta}{\delta}}
     \right\rfloor,
\end{equation}
and define $t_i=\frac{i}{\alpha}\log \frac{\beta}{\delta}$ for $i=1,\ldots,n$.
Note that the condition for $\epsilon$ guarantees that $\beta/\delta>1$.
(Roughly speaking, $t_1,\ldots,t_n$ divide the interval $[0,T]$ in $n$ intervals of
length $\frac{1}{\alpha}\log \frac{\beta}{\delta}$.)

Denote the $\epsilon$-inner parallel set of $S$ 
by
\begin{equation}\label{inner}
  S^{(\epsilon)}=\{x\in S: \mathcal{B}(x,\epsilon)\subset S\}~.
\end{equation}
Then
\begin{eqnarray*}
\PROB\{d_H(S_T,S) > \epsilon\} & \le &
\PROB\{\exists x\in S^{(\epsilon)}: \forall t\in [0,T]: X_t\notin \mathcal{B}(x,\epsilon) \} \\
& \le & \PROB\{\exists x\in S^{(\epsilon)}: \forall i\in \{1,2,\ldots,n\}: X_{t_i}\notin \mathcal{B}(x,\epsilon) \}~.
\end{eqnarray*}
Let $x_1,\ldots,x_N\in S^{(\epsilon)}$ be such that
\[
S^{(\epsilon)} \subset \mathcal{B}(x_1,\epsilon/2) \cup \cdots \cup \mathcal{B}(x_N,\epsilon/2),
\]
and $N$ is the smallest positive integer such that such covering of $S^{(\epsilon)}$
is possible. $N=N(\epsilon/2)$ is called the $\epsilon/2$-covering number of
$S^{(\epsilon)}$. It is easy to see (and well known) that
$N\le \mu(S)/\mu(\mathcal{B}(0,\epsilon/4)) = (\epsilon/4)^{-d} \mu(S)/v_0$.

If for some $x\in S$ we have $X_{t_i}\notin \mathcal{B}(x,\epsilon)$ for all
$i=1,\ldots,n$, then there exists a  $j\in \{1,\ldots,N\}$ such that
 $X_{t_i}\notin \mathcal{B}(x_j,\epsilon/2)$ for all
$i=1,\ldots,n$. Thus, continuing the chain of inequalities above,
\begin{eqnarray*}
\PROB\{d_H(S_T,S) > \epsilon\}
& \le & \PROB\{\exists j\in \{1,\ldots,N\}: \forall i\in \{1,2,\ldots,n\}: X_{t_i}\notin \mathcal{B}(x_j,\epsilon/2) \}  \\
& \le &
N\sup_{x\in S^{(\epsilon)}} \PROB\{\forall i\in \{1,2,\ldots,n\}: X_{t_i}\notin \mathcal{B}(x,\epsilon/2) \}
\end{eqnarray*}
Next, we estimate the probability on the right-hand side. For all $x\in S$,
\begin{eqnarray} \label{eq2}
\lefteqn{
\PROB\{\forall i\in \{1,2,\ldots,n\}: X_{t_i}\notin \mathcal{B}(x,\epsilon/2) \}  }  \\
& = &
\PROB\{X_{t_n}\notin \mathcal{B}(x,\epsilon/2)| \forall i\in \{1,2,\ldots,n-1\}: X_{t_i}\notin \mathcal{B}(x,\epsilon/2) \}  \nonumber  \\
& & \qquad \times
\PROB\{\forall i\in \{1,2,\ldots,n-1\}: X_{t_i}\notin \mathcal{B}(x,\epsilon/2) \}\nonumber \\
& = &
\PROB\{X_{t_n}\notin \mathcal{B}(x,\epsilon/2)| X_{t_{n-1}}\notin \mathcal{B}(x,\epsilon/2) \}
\times
\PROB\{\forall i\in \{1,2,\ldots,n-1\}: X_{t_i}\notin \mathcal{B}(x,\epsilon/2) \}\nonumber \\
& & \text{(since $X_t$ is a Markov process)}\nonumber
\end{eqnarray}
Now, by Proposition \ref{prop:burdzy} (iii),
\begin{eqnarray*}
\PROB\{X_{t_n}\notin \mathcal{B}(x,\epsilon/2)| X_{t_{n-1}}\notin \mathcal{B}(x,\epsilon/2) \}
& \le & 1-\frac{(\epsilon/2)^dv_0}{\mu(S)} +\beta e^{-\alpha (t_n-t_{n-1})}
\\
& = & 1-\frac{(\epsilon/2)^dv_0}{\mu(S)} + \delta  \\
& = & 1-\frac{(\epsilon/2)^dv_0}{2\mu(S)}
\end{eqnarray*}
by the definition of $\delta$.
Hence, we have
\[
\PROB\{\forall i\in \{1,2,\ldots,n\}: X_{t_i}\notin \mathcal{B}(x,\epsilon/2) \}
\le \left(1-\frac{(\epsilon/2)^dv_0}{2\mu(S)}  \right)
\PROB\{\forall i\in \{1,2,\ldots,n-1\}: X_{t_i}\notin \mathcal{B}(x,\epsilon/2) \}
\]
and by iterating the argument,
\[
\PROB\{\forall i\in \{1,2,\ldots,n\}: X_{t_i}\notin \mathcal{B}(x,\epsilon/2) \}
\le \left(1-\frac{(\epsilon/2)^dv_0}{2\mu(S)}  \right)^n
\le \exp\left(- n \frac{(\epsilon/2)^dv_0}{2\mu(S)}\right)~.
\]
Summarizing, and substituting the value of $n$ and $\delta$, we have
\[
\PROB\{d_H(S_T,S) > \epsilon\}
\le
\frac{(\epsilon/4)^{-d} \mu(S)}{v_0}
\exp\left(- T \frac{(\epsilon/2)^dv_0\alpha}{2\mu(S)\log \frac{2\beta \mu(S)}{v_0(\epsilon/2)^d} }\right)~.
\]
\end{proof}

Theorem \ref{thm:rates} implies that, ignoring logarithmic factors,
$d_H(S_T,S)$ is roughly of the order of $(T\alpha/\mu(S))^{-1/d}$.
More precisely, we have the following:

\begin{corollary} \label{hausrate}
Let $S\subset \mathbb{R}^d$ be a compact set such that $S=\overline{\textnormal{int}(S)}$ and $\textnormal{int}(S)$ is a non-trap domain.
Let $\{X_t:t\ge 0\}$ be a reflected
Brownian motion in $S$. Let $S_T$ be any measurable set containing
the trajectory $W_T$ such that $S_T\subset S$.
Then 
\[
d_H(S_T,S) =o\Big(\Big(\frac{\log(T)^2}{T}\Big)^{1/d}\Big) \quad a.s.
\]
\begin{proof} Since $d_H(S_T,S)$ is non-increasing it suffices to prove that 
$$d_H(S_n,S) =o\Big(\Big(\frac{(\log(n))^2}{n}\Big)^{1/d}\Big) \quad a.s.$$
We show that, for all $\epsilon>0$, $\sum_{n=1}^{\infty} \mathbb{P}(a_n d_H(S_n,S)>\eps)<\infty$ where $a_n= \Big(\frac{K\alpha}{3}\frac{n}{(\log(n))^2}\Big)^{1/d}$ and then
apply the Borel-Cantelli lemma. Observe that, if we denote $K=\nu_0\epsilon^d/(2^{d+1}\mu(S))$, then
\begin{align*}
\mathbb{P}(a_nd_H(S_n,S)>\epsilon)\leq&\  a_n^dK^{-1}2^{d-1}\exp\Big(-n K\alpha\frac{1}{a_n^d\log(K^{-1}\beta a_n^d)}\Big)\\
                                   =&\  K^{-1}2^{d-1}\exp\Big(\log(a_n^d)-n K\alpha\frac{1}{a_n^d\log(K^{-1}\beta a_n^d)}\Big).
                                   \end{align*}
Then $\sum_{n=1}^{\infty} \mathbb{P}(a_n d_H(S_n,S)>\eps)<\infty$ follows from the fact that 
$$\lim_{n\rightarrow+\infty} \frac{1}{\log(n)}\Big[\log(a_n^d)-n K\alpha\frac{1}{a_n^d\log(K^{-1}\beta a_n^d)}\Big]=-2.$$
\end{proof}

\end{corollary}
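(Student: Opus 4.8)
The plan is to upgrade the exponential tail bound of Theorem~\ref{thm:rates} to an almost-sure rate by a Borel--Cantelli argument, after first reducing the continuous-time claim to a claim along the integers. First I would note that since $W_T\subset S_T\subset S$, we have $d_H(S_T,S)=\sup_{s\in S}d(s,S_T)\le\sup_{s\in S}d(s,W_T)=d_H(W_T,S)$, and $T\mapsto d_H(W_T,S)$ is non-increasing because $W_T$ grows with $T$. Hence for $T\in[n,n+1)$ one has $d_H(S_T,S)\le d_H(W_n,S)$, and since the normalizing factor $(\log(T)^2/T)^{1/d}$ varies by a factor tending to $1$ over each such interval, it suffices to establish $d_H(W_n,S)=o\big((\log(n)^2/n)^{1/d}\big)$ almost surely along the integers. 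As $W_n$ is itself an admissible choice of set in Theorem~\ref{thm:rates} (it is compact, contains $W_n$ trivially, and lies in $S$), I may apply that bound directly to it.

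To this end I would fix $\eps>0$, set $K=v_0\eps^d/(2^{d+1}\mu(S))$, and introduce the normalizing sequence $a_n=\big(\tfrac{K\alpha}{3}\,\tfrac{n}{(\log n)^2}\big)^{1/d}$, which is comparable to the reciprocal of the target scale; proving $a_nd_H(W_n,S)\to0$ almost surely is then equivalent to the desired conclusion. By Borel--Cantelli it is enough to show $\sum_n\PROB\{a_nd_H(W_n,S)>\eps\}<\infty$. The crucial step is to apply Theorem~\ref{thm:rates} with the shrinking threshold $\eps/a_n$ in place of $\epsilon$ (legitimate for $n$ large, since $\eps/a_n\to0$ eventually meets the constraint on $\epsilon$). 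After substituting and simplifying the constants, the tail bound becomes
\[
\PROB\{a_nd_H(W_n,S)>\eps\}\le K^{-1}2^{d-1}\exp\!\Big(\log(a_n^d)-nK\alpha\,\frac{1}{a_n^d\log(K^{-1}\beta a_n^d)}\Big).
\]

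Finally I would verify summability by analyzing the exponent. Writing $a_n^d=\tfrac{K\alpha}{3}\tfrac{n}{(\log n)^2}$ gives $\log(a_n^d)\sim\log n$, while $nK\alpha/a_n^d=3(\log n)^2$ and $\log(K^{-1}\beta a_n^d)\sim\log n$, so the subtracted term is asymptotic to $3\log n$. Thus the bracketed quantity divided by $\log n$ tends to $1-3=-2$; equivalently,
\[
\lim_{n\to\infty}\frac{1}{\log n}\Big[\log(a_n^d)-nK\alpha\,\frac{1}{a_n^d\log(K^{-1}\beta a_n^d)}\Big]=-2.
\]
Since $-2<-1$, the general summand is eventually bounded by $n^{-3/2}$, the series converges, and Borel--Cantelli yields $a_nd_H(W_n,S)\to0$ almost surely, which by the reduction of the first paragraph gives the stated rate. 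The main obstacle, and the only genuinely delicate point, is calibrating the constant inside $a_n$ (the factor $K\alpha/3$) so that the logarithmic gain $\log(a_n^d)$ coming from the covering prefactor is \emph{strictly} overwhelmed by the exponential decay: the precise bookkeeping of the constants and of the $\log\log n$ corrections is exactly what forces the limit to be $<-1$, while the probabilistic content is already packaged in Theorem~\ref{thm:rates}.
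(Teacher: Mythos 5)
Your proposal is correct and follows essentially the same route as the paper: reduce to integer times via the monotonicity of $T\mapsto d_H(W_T,S)$, apply Theorem~\ref{thm:rates} with the shrinking threshold $\eps/a_n$ for the same choice of $a_n$ and $K$, verify that the exponent divided by $\log n$ tends to $-2$, and conclude by Borel--Cantelli. Your write-up is in fact slightly more careful than the paper's on two points the paper leaves implicit (why $d_H(S_T,S)$ can be controlled by a monotone quantity, and why the constraint on $\epsilon$ in Theorem~\ref{thm:rates} is eventually satisfied), but the argument is the same.
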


We may now apply the previous results to analyze the RBM-sausage  $D_T=B(W_T, \epsilon_T)$ for some appropriately chosen decreasing function 
$\epsilon_T$.

\begin{theorem}
\label{thm:rbmsausage}
Let $S\subset \mathbb{R}^d$ be a compact set such that $S=\overline{\textnormal{int}(S)}$ and $\textnormal{int}(S)$ is a non-trap domain. Let $\{X_t:t\ge 0\}$ be a reflected
Brownian motion in $S$. Assume that the surface area $L_0(\partial S)$ exists. 
Let $D_T=B(W_T, \epsilon_T)$, with
$\epsilon_T\sim \Big(\frac{\log(T)^2}{T}\Big)^{1/d}$. Then
$d_\mu(D_T,S)= \mathcal{O}\Big(\Big(\frac{\log(T)^2}{T}\Big)^{1/d}\Big)$ a.s.
\begin{proof} Observe that $\mu(D_T,S)=\mu(D_T\setminus S)+\mu(S\setminus D_T)$. 
By the existence of $L_0(\partial S)$ we have
$\mu(D_T\setminus S)=\mathcal{O}(\epsilon_T)$. Since $W_T\subset S$ we may apply Corollary \ref{hausrate} to conclude that
$d_H(W_T,S)= o(\epsilon_T)$ a.s., and then, for $T$ large enough, $S\subset D_T$, a.s., so $\mu(S\setminus D_T)=0$.
\end{proof}
\end{theorem}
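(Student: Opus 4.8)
The plan is to control the symmetric difference by splitting it as $d_\mu(D_T,S) = \mu(D_T \setminus S) + \mu(S \setminus D_T)$ and bounding each piece separately. For the outer excess $\mu(D_T \setminus S)$ I would appeal to the existence of the surface area through the outer Minkowski content, while for the inner deficit $\mu(S \setminus D_T)$ I would invoke the almost-sure Hausdorff rate of Corollary \ref{hausrate}, applied directly to the trajectory $W_T$.

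For the outer excess, I would first observe that since $W_T \subset S$, every point of $D_T = B(W_T, \epsilon_T)$ lies within distance $\epsilon_T$ of a point of $S$, so that $D_T \subset B(S, \epsilon_T)$ and hence $D_T \setminus S \subset B(S, \epsilon_T) \setminus S$. Using Definition \ref{minkcont} and the assumption that $L_0(\partial S)$ exists, I would conclude $\mu\big(B(S, \epsilon_T) \setminus S\big) = \epsilon_T L_0(\partial S) + o(\epsilon_T)$, which yields $\mu(D_T \setminus S) = \mathcal{O}(\epsilon_T) = \mathcal{O}\big((\log(T)^2/T)^{1/d}\big)$.

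For the inner deficit, the key observation I would exploit is that Corollary \ref{hausrate} applies with $S_T = W_T$, since the trajectory trivially satisfies $W_T \subset W_T \subset S$. This gives $d_H(W_T, S) = o\big((\log(T)^2/T)^{1/d}\big) = o(\epsilon_T)$ almost surely. Because $W_T \subset S$, the Hausdorff distance reduces to $\max_{x \in S} d(x, W_T)$, so once $d_H(W_T, S) < \epsilon_T$ every point of $S$ lies within distance $\epsilon_T$ of the trajectory, i.e.\ $S \subset B(W_T, \epsilon_T) = D_T$. Hence, almost surely, for all $T$ past some (random) threshold one has $\mu(S \setminus D_T) = 0$, and combining this with the outer bound finishes the argument.

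The step I expect to require the most care is the translation of the asymptotic Hausdorff bound into the \emph{exact} containment $S \subset D_T$: one must check that the $o(\epsilon_T)$ statement, which holds only almost surely and only for $T$ beyond a random threshold, suffices to eliminate the inner-deficit term outright rather than merely making it small. The remaining ingredients---the inclusion $D_T \subset B(S, \epsilon_T)$ and the linear-in-$\epsilon_T$ bound furnished by the Minkowski content---are routine.
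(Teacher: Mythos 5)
Your proposal is correct and follows essentially the same route as the paper's proof: the same decomposition of the symmetric difference, the Minkowski-content bound $\mu(D_T\setminus S)=\mathcal{O}(\epsilon_T)$ via $D_T\subset B(S,\epsilon_T)$, and Corollary \ref{hausrate} applied to $S_T=W_T$ to get $d_H(W_T,S)=o(\epsilon_T)$ and hence $S\subset D_T$ eventually, killing the inner term. You merely spell out a couple of routine inclusions that the paper leaves implicit.
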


\begin{remark} As it is typically the case in non-parametric estimation, the choice of the smoothing parameter is a crucial point. Since in practice the data of the trajectory are always discretized, we suggest two different approaches to select $\eps_T$, built from the discretization of the trajectory.   
		\begin{itemize}
		\item If we assume that the set $S$ is connected, we may use the analogue of the proposal of \cite{ba:00} for the i.i.d case. More precisely, the smoothing parameter $\eps_T$ is chosen as:
		$$\overline{\epsilon_T}=\inf\left\{\epsilon>0:D_T \text{ is connected} \right\}.$$
		An easy-to-implement algorithm is also proposed in \cite{ba:00}, reminiscent of the minimal spanning tree.
		\item For the general case, we suggest the following procedure. After discretization, split the sample
 at random 
in two sub-samples $W_{T_1}$ and $W_{T_2}$ of the same size $n$. Then the smoothing parameter is given by
		$$\overline{\epsilon_T}=\inf\left\{\epsilon>0:W_{T_2}\subset B(W_{T_1},\epsilon) \right\},$$  which is easy to calculate.  Indeed, for each point $W_i \in W_{T_2}  \ \  i=1, \ldots n$,  let $d_i$ stand for the distance of $W_i$ to its nearest neighbor in $W_{T_1}$  and let $\mathbf d=(d_1, \ldots, d_{n})$.  Then $\overline{\epsilon_T} = \max_{i=1,\ldots, n} d_i$. A more robust version is to take $\epsilon_T$ as the $1-\delta$  quantile of the vector $\mathbf d$ for a small value of $\delta$.  
					\end{itemize}
\end{remark}

\begin{remark} \label{choiceofr} 
The consistency of the $r$-convex hull estimator $C_r(W_T)$ of the trajectory $W_T$
is established in Theorem \ref{thm:rhullcons} under the sometimes unrealistic assumption 
that a lower bound is known for the maximal value of $r_0$ under which the set $S$
is $r_0$-convex. Here we suggest a way of choosing the parameter $r$ in a  data dependent manner.  If the set $S$ is $r$--convex for some $r>0$, then it is also $r'$--convex for any
$0<r'<r$ and a sufficiently small value of the parameter will do the job. However, in practice this does not provide a guide to choose the parameter. One may choose $r$ in a data dependent way by selecting $\hat{r}$ satisfying 
$$
d_H\left(C_{\hat{r}}(W_T), B(W_T, \epsilon_T)\right) \leq \inf_{r>0} d_H\left(C_r(W_T), B(W_T, \epsilon_T)\right)+\delta
$$
for an arbitrary small $\delta>0$, where $\epsilon_T$ is chosen as suggested by Theorem \ref{thm:rbmsausage}.
Thus, $\hat{r}$ is a value that makes the $r$--convex hull as close as possible to the RBM--sausage. It is an easy exercise to prove that $C_{\hat{r}}(W_T)$ is a consistent estimate of $S$
under the conditions of Theorem \ref{thm:rhullcons}, taking $\delta\rightarrow 0$.
\end{remark}

We close this section by establishing rates of convergence of the $r$-convex hull of the
trajectory $W_T$. For the simplicity of the exposure we concentrate on the 
$2$-dimensional case. The argument may easily be generalized for $d>2$
to obtain
$\mathbb{E}\Big(d_\mu(S_T,S)\Big)=\mathcal{O}\big(\log(T)/T)^{-2/(d+1)}\big).$

\begin{theorem}
Let $S\subset \mathbb{R}^2$ be a non-empty, connected and compact set such that $S=\overline{\textnormal{int}(S)}$. Suppose that a ball of radius $r$ rolls freely in $S$ and in $\overline{S^c}$. Let $\{X_t:t\ge 0\}$ be a reflected
Brownian motion in $\textnormal{int}(S)$.  Let $S_T=C_r(W_T)$.
Then
\[
\mathbb{E}\Big(d_\mu(S_T,S)\Big)=O((\log(T)/T)^{-2/3})~,
\]
where $\mu$ denotes the Lebesgue measure.
\end{theorem}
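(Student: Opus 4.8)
The plan is to first reduce the symmetric-difference error to a one-sided (inner) error, and then to run the cap-counting argument of the i.i.d.\ theory (\cite{rodriguez:07}, \cite{cuevas:12}) with the binomial tail replaced by the reflected-Brownian hitting estimate already used in the proof of Theorem \ref{thm:rates}. Since a ball of radius $r$ rolls freely in both $S$ and $\overline{S^c}$, Proposition \ref{rollthennontrap} guarantees that the reflected Brownian motion in $\textnormal{int}(S)$ exists and that $\textnormal{int}(S)$ is non-trap, so Proposition \ref{prop:burdzy}(iii) applies with some constants $\alpha,\beta>0$. Moreover the two-sided rolling condition forces $S$ to be $r$-convex (indeed $\textnormal{reach}(S)\ge r$, cf.\ \cite{cuevas:12}), so $C_r(S)=S$. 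As $W_T\subset S$ we get $S_T=C_r(W_T)\subset C_r(S)=S$, whence $\mu(S_T\setminus S)=0$ and $d_\mu(S_T,S)=\mu(S\setminus C_r(W_T))$. By Walther's theorem (\cite{walther:99}) $\partial S$ is a $C^1$ curve of curvature bounded by $1/r$ and of finite length $L=L_0(\partial S)<\infty$; this is the only place the surface area enters.

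Next I would decompose the inner error into boundary caps. Fix a scale $\ell>0$ and cover a neighbourhood of $\partial S$ inside $S$ by circular segments (caps) $Q_1,\dots,Q_M$ associated with boundary arcs of length $\asymp\ell$, so that $M\asymp L/\ell$ and, by the curvature bound, each cap has depth $\asymp \ell^2/r$ and Lebesgue measure $\mu(Q_j)\asymp \ell^3/r$. Adapting the geometric lemmas of \cite{rodriguez:07} and \cite{cuevas:12}, if the trajectory enters the (slightly shrunk) cap $Q_j'$ then the corresponding portion of $S$ is filled in by $C_r(W_T)$; consequently
\[
\mu\big(S\setminus C_r(W_T)\big)\le \sum_{j=1}^{M}\mu(Q_j)\,\mathbb{I}\big\{W_T\cap Q_j'=\emptyset\big\}.
\]
Taking expectations and using linearity, it remains to bound $\PROB\{W_T\cap Q_j'=\emptyset\}$. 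Here I would reuse the discretization of the proof of Theorem \ref{thm:rates}: split $[0,T]$ into $n\asymp T\alpha/\log(1/\delta)$ epochs of length $\tfrac1\alpha\log(\beta/\delta)$ with $\delta\asymp \mu(Q_j')/\mu(S)$. Since Proposition \ref{prop:burdzy}(iii) gives $\inf_{y}\PROB^y\{X_s\in Q_j'\}\ge \mu(Q_j')/\mu(S)-\beta e^{-\alpha s}$ for any measurable set, the Markov property yields
\[
\PROB\{W_T\cap Q_j'=\emptyset\}\le \Big(1-\tfrac{\mu(Q_j')}{2\mu(S)}\Big)^{n}\le \exp\!\Big(-c\,\frac{T}{\log T}\,\frac{\ell^3}{r}\Big)
\]
for a constant $c>0$ depending on $\alpha,\mu(S),r$.

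Finally I would assemble the estimate by summing over dyadic scales $\ell=2^{-k}\ell_{\max}$. The contribution of scale $\ell$ is of order $(L/\ell)(\ell^3/r)\exp(-cT\ell^3/(r\log T))=(L\ell^2/r)\exp(-cT\ell^3/(r\log T))$, which is maximized near the critical scale $\ell^{\ast}\asymp(\log T/T)^{1/3}$, where it equals a constant multiple of $(\ell^{\ast})^2\asymp(\log T/T)^{2/3}$; the geometric sum over scales does not change the order. This yields $\mathbb{E}\big(d_\mu(S_T,S)\big)=O\big((\log T/T)^{2/3}\big)$, i.e.\ the stated rate.

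The main obstacle is the deterministic geometric reduction in the second paragraph: one must show, uniformly along the $C^1$ boundary and for every scale $\ell$ simultaneously, that each uncovered point of $S$ can be charged to an empty cap of the claimed cubic area, correctly handling the existential quantifier over free $r$-balls and the overlaps between neighbouring caps. The probabilistic step is comparatively routine, being a direct transcription of the hitting-time argument carried out for Theorem \ref{thm:rates}; its only subtleties are verifying that the total-variation bound of Proposition \ref{prop:burdzy}(iii), which holds for arbitrary measurable sets and uniformly in the starting point, applies to the thin cap regions $Q_j'$, and that the logarithmic factor produced by the epoch length $\tfrac1\alpha\log(\beta/\delta)$ aggregates to the single $\log T$ appearing in the rate.
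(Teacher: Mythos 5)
Your probabilistic machinery is exactly the paper's: reduce to the one-sided error $\mu(S\setminus C_r(W_T))$ using $r$-convexity of $S$, discretize $[0,T]$ into epochs of length $\tfrac1\alpha\log(\beta/\delta)$, lower-bound the per-epoch hitting probability of a small region by $\mu(\cdot)/\mu(S)-\delta$ via Proposition \ref{prop:burdzy}(iii), iterate with the Markov property, and balance the resulting exponential against the area of the region. The final bookkeeping (critical scale $(\log T/T)^{1/3}$, hence rate $(\log T/T)^{2/3}$) also agrees with the paper's computation.

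The genuine gap is the deterministic geometric reduction, which you yourself flag as ``the main obstacle'' but do not carry out, and whose heuristic justification as written is logically backwards. A single visit of $W_T$ to a cap $Q_j'$ does \emph{not} ``fill in'' a positive-area portion of $C_r(W_T)$: a point $x$ belongs to $C_r(W_T)$ only if \emph{every} open $r$-ball containing $x$ meets $W_T$, so to cover $x$ one must hit many regions simultaneously, not one. The correct contrapositive is: if $x\notin C_r(W_T)$, then \emph{some} $r$-ball through $x$ misses $W_T$, and one needs a finite family of sets, one of which is guaranteed to sit inside \emph{any} such ball. This is precisely the notion of an unavoidable family, and the paper imports it from Propositions 1 and 2 of \cite{pateiro:13}: for each $x\in S$ there is a family $\mathcal{U}_{x,r}$ of $6$ sets, unavoidable for the $r$-balls through $x$, with $\mu(U\cap S)\ge L_1 r^2$ when $d(x,\partial S)\ge r/2$ and $\mu(U\cap S)\ge L_2 r^{1/2} d(x,\partial S)^{3/2}$ otherwise. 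The pointwise union bound $\PROB\{x\notin C_r(W_T)\}\le\sum_{U\in\mathcal{U}_{x,r}}\PROB\{U\cap W_T=\emptyset\}$ followed by Fubini then replaces your cap-counting sum; the integral over $u=d(x,\partial S)\in[0,r/2]$ (using that the distribution function of the distance to $\partial S$ has bounded density, by positive reach of $\overline{S^c}$) plays the role of your dyadic sum over scales and yields the $(\log T/T)^{2/3}$ rate after the substitution $z=c_1Tu^{3/2}$. Without the unavoidable-family lemma, or an equivalent multi-scale charging scheme proved from scratch, your argument does not close; with it, your plan coincides with the paper's proof.
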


\begin{definition} Let $x\in \mathbb{R}$, $r>0$ and $\mathcal{E}_{x,r}=\big\{\mathcal{B}(y,r):y\in
\mathcal{B}(x,r)\big\}$. Following \cite{pateiro:13} the family of subsets
$\mathcal{U}_{x,r}$ is said to be unavoidable for
$\mathcal{E}_{x,r}$, if, for all $\mathcal{B}(y,r)\in \mathcal{E}_{x,r}$
there exists $U\in \mathcal{U}_{x,r}$ such that $U\subset \mathcal{B}(y,r)$.
\end{definition}

\begin{proof} First observe that, by Proposition \ref{rollthennontrap}, the reflected Brownian motion exists and is non-trap. Let $S^{(r/2)}$ be the $r/2$-inner parallel set of $S$ defined in (\ref{inner}). 
For every $x\in S$ we may take an unavoidable family $\mathcal{U}_{x,r}$ with $6$ elements, such that, for all $U\in \mathcal{U}_{x,r}$, $\mu(U\cap S)\geq L_1r^2$ if $x\in S^{(r/2)}$, and $\mu(U\cap S)\geq L_2 r^{1/2}d(x,\partial S)^{3/2}$ if $x\in S\setminus S^{(r/2)}$   where $L_1$ and $L_2$ are positive constants (see Propositions 1 and 2 in \cite{pateiro:13}).
Then
\begin{align} \label{eq1}
\mathbb{E}\big(d_\mu(S,S_T)\big)&\leq  \int_{S}\sum_{U\in \mathcal{U}_{x,r}}\mathbb{P}\big(U\cap W_T=\emptyset\big)dx\nonumber\\
 &=\int_{S_1}\sum_{U\in \mathcal{U}_{x,r}}\mathbb{P}\big(U\cap W_T=\emptyset\big)dx+\int_{S_2}\sum_{U\in \mathcal{U}_{x,r}}\mathbb{P}\big(U\cap W_T=\emptyset\big)dx.
\end{align}
Where $S_1=S\setminus S^{(r/2)}$ and $S_2=S^{(r/2)}$. First we
bound the second term in (\ref{eq1}). Define $\delta=
L_2r^2/2$, $n$ as in (\ref{eqn}). Since a $r$-convex set is also
$r'$-convex, with $r'<r$, we can assume without loss of generality
that $r/2<1$ and $\beta/\delta>1$. Define
$t_i=\frac{i}{\alpha}\log \frac{\beta}{\delta}$ for
$i=1,\ldots,n$. For all $U\in \mathcal{U}_{x,r}$,
$$\PROB\{U\cap W_T=\emptyset\} \le  \PROB\{\forall i\in \{1,2,\ldots,n\}: X_{t_i}\notin U \}.$$
Now, if we proceed as in (\ref{eq2}),
$$\PROB\{\forall i\in \{1,2,\ldots,n\}: X_{t_i}\notin U \}= \PROB\{X_{t_n}\notin U| X_{t_{n-1}}\notin U \}
 \PROB\{\forall i\in \{1,2,\ldots,n-1\}: X_{t_i}\notin U \},$$

and by Proposition \ref{prop:burdzy} (iii),
\begin{eqnarray*}
\PROB\{X_{t_n}\notin U| X_{t_{n-1}}\notin U \}
& \le & 1-L_2r^2 +\beta e^{-\alpha (t_n-t_{n-1})}
\\
& = & 1-L_2r^2 + \delta  \\
& = & 1-\frac{L_2r^2}{2},
\end{eqnarray*}
by the definition of $\delta$.
Hence, we have
\[
\PROB\{\forall i\in \{1,2,\ldots,n\}: X_{t_i}\notin U \}
\le \left(1-\frac{L_1r^2}{2}\right)^n
\le \exp\left(- n \frac{L_1r^2}{2}\right)~.
\]
Then the second term is $\mathcal{O}(e^{-k_1T})$ for some constant
$k_1>0$. To deal with the first term, we proceed the same
way as before. If we take $\delta= \frac{L_2 r^{1/2}d(x,\partial
S)^{3/2}}{2}$ (observe that $\beta/\delta>1$), we have that
\begin{align*}
\PROB\{\forall i\in \{1,2,\ldots,n\}: X_{t_i}\notin U \}
\le& \left(1-\frac{L_2r^{1/2}d(x,\partial S)^{3/2}}{2}\right)^n\\
\le& \exp\left(- n \frac{L_2r^{1/2}d(x,\partial S)^{3/2}}{2}\right)\\
\le& \exp\left(- \frac{T}{\frac{1}{\alpha}\log\big(\frac{2\beta}{L_2r^{1/2}d(x,\partial S)^{3/2}}\big)} \frac{L_2r^{1/2}d(x,\partial S)^{3/2}}{2}\right)~.
\end{align*}
Then,
\begin{align*}
\int_{S_1}\sum_{U\in \mathcal{U}_{x,r}}\mathbb{P}\big(U\cap W_T=\emptyset\big)dx
\leq & \ 6\int_{S_1}\exp\left(- \frac{T}{\frac{1}{\alpha}\log\big(\frac{2\beta}{L_2r^{1/2}d(x,\partial S)^{3/2}}\big)} \frac{L_2r^{1/2}d(x,\partial S)^{3/2}}{2}\right)dx~.
\end{align*}
Let  $F(u)=\mu\{x\in S: d(x,\partial S)\leq u\}$ be the
distribution of the distance to the boundary $\partial S$ with
respect to the Lebesgue measure. Since the set $\overline{S^c}$
has positive reach, $F$ is polynomial in $u$) (see
\cite{pateiro:09}) and $m=\max_{u\in [0,r/2]}F'(u)<\infty$.  Then,
if $u=d(x,\partial S)$,
\begin{align} \label{eq3}
\int_{S_1}\sum_{U\in \mathcal{U}_{x,r}}\mathbb{P}\big(U\cap W_T=\emptyset\big)dx \leq & \ 6m\int_{0}^{r/2} \exp\left(- \frac{T}{\frac{1}{\alpha}\log\big(\frac{2\beta}{L_2r^{1/2}u^{3/2}}\big)} \frac{L_2r^{1/2}u^{3/2}}{2}\right)du~.
\end{align}
 Now let $z=\alpha TL_2r^{1/2}u^{3/2}/2\doteq c_1Tu^{3/2}$. 
Then 
\begin{multline*}
\int_{0}^{r/2} \exp\left(- \frac{T}{\frac{1}{\alpha}\log\big(\frac{2\beta}{L_2r^{1/2}u^{3/2}}\big)} \frac{L_2r^{1/2}u^{3/2}}{2}\right)du=\\
\frac{2}{3}(Tc_1)^{-2/3}\int_{0}^{c_1T(r/2)^{3/2}} z^{-1/3}\exp\left(- \frac{z}{\log\big(\frac{\beta\alpha T}{z}\big)}\right)dz.\\
\end{multline*}
Since $r^{3/2}<1$, 
\begin{equation} \label{eq0} \frac{1}{T^{\frac{2}{3}}}\int_{0}^{c_1T(\frac{r}{2})^{\frac{3}{2}}} \frac{1}{z^{\frac{1}{3}}}\exp\left(- \frac{z}{\log\big(\frac{\beta\alpha T}{z}\big)}\right)dz\leq \frac{1}{T^{\frac{2}{3}}}\int_{0}^{c_1T}\frac{1}{z^{\frac{1}{3}}} \exp\left(- \frac{z}{\log\big(\frac{\beta\alpha T}{z}\big)}\right)dz.
\end{equation}
Taking $T$ large enough such that $c_1T>\alpha\beta$, we can majorize the right hand side of (\ref{eq0}) by  
$$
\frac{1}{T^\frac{2}{3}}\int_0^{\alpha\beta} \frac{1}{z^{\frac{1}{3}}} \exp\left(- \frac{z}{\log\big(\frac{\beta\alpha T}{z}\big)}\right)dz+
\frac{1}{T^\frac{2}{3}}\int_{\alpha\beta}^{c_1T} \frac{1}{z^{\frac{1}{3}}} \exp\left(- \frac{z}{\log(T)}\right)dz~.
$$
Finally, taking $s= \frac{z}{\log(T)}$, 
\begin{align*}
\int_{\alpha\beta}^{c_1T} \frac{1}{z^{\frac{1}{3}}} \exp\left(- \frac{z}{\log(T)}\right)dz=&\log(T)^{2/3}\int_{\alpha\beta/\log(T)}^{c_1T/\log(T)} \frac{1}{s^{\frac{1}{3}}} \exp(-s)ds\\
\leq & \log(T)^{2/3}\int_{0}^{+\infty} \frac{1}{s^{\frac{1}{3}}} \exp(-s)ds=K(\log(T))^{2/3}.
\end{align*}
\end{proof}


\section{Reflected diffusions} 
\label{sec:diffusion}

As mentioned in the introduction, even though the reflected Brownian
motion is a natural model, it may be too simplistic for some applications.
In particular, the fact that its stationary distribution is uniform on the domain
fails to capture some important aspects of animal movement. In order to 
address this problem, we suggest the more general, albeit less understood, model
of reflected diffusions.

A reflected diffusion 
in a connected and open domain $D \subset \mathbb{R}^d$ 
corresponds to the solution of the stochastic equation 
	\begin{equation} \label{sdeg}
	X_t=X_0+ U_t+\int_0^t\eta(X_s)dL_s,
	\end{equation}
	where
	$$U_t=\int_0^t\sigma(X_s)dB_s+\int_0^t b(X_s)ds,$$
	and $B_t$ is a $d$-dimensional Brownian motion, $\eta$ denotes the inward unit vector on the boundary $\partial D$, $L$ is a continuous nondecreasing process with $L_0=0$, and
	\[L_t=\int_0^t\mathbb{I}_{\{X_s\in\partial D\}}dL_s.\] 
\cite{saisho:87} proved that under the geometric conditions \eqref{cond1} and \eqref{cond2} there exists a unique strong solution to \eqref{sdeg} whenever $\sigma$ and $b$ are Lipschitz functions.

By taking $\sigma=1$ and $b=0$, one recovers reflected Brownian motions. 
By other choices of $\sigma$ and $b$ one obtains a rich class of stochastic processes
with possibly non-uniform stationary distribution.
A simple special case is the reflected Brownian motion with constant drift $\mu$, that corresponds to $U_t=B_t+ t\mu$ in \eqref{sdeg}. \cite{hw:87} proved that in this case, the stationary distribution has the form $C(\mu)\exp(\langle\gamma(\mu),x\rangle)$ for some $C(\mu)>0$ and $\gamma(\mu)\in \mathbb{R}^d$.

Assuming that the domain $D$ is non-trap for the reflected diffusion $X_t$,
one may generalize Theorem \ref{thcons}. Indeed, by observing that the proof
of the theorem relies on the non-trap condition, we have that 
if $S$ is a compact set satisfying $S=\overline{S}$ and $int(S)$ is connected,
for any reflected diffusion in $int(S)$ 
for which $int(S)$ is non-trap,
	\begin{itemize}
			\item[i)] $d_H(S_T,S)\rightarrow 0$ as $T\rightarrow +\infty$, under condition $W_T\subset S_T\subset B(S,\eps_T)$ for some sequence $\eps_T\rightarrow 0$ and any $S_T$ which contains the trajectory $W_T=\{X_t:0\leq t\leq T\}$.
			\item[ii)] In particular, $C_r(W_T)$ and $D_T$ are consistent estimates since they satisfy the condition $W_T\subset S_T\subset B(S,\eps_T)$.
			\end{itemize}
On the other hand, the non-trap condition is close to be necessary for consistency of $D_T$. In fact, is necessary for the complete convergence with respect to the Hausdorff distance. Indeed, if $int(S)$ is a trap domain, there exists a closed ball $B$ with radius $\delta>0$ and a sequence $x_n\in int(S)$  such that $\mathbb{E}(T^{x_n}_{B})\rightarrow +\infty$. Let $\epsilon<\delta/4$. If $d_H(S_T,S)$ converges to $0$ completely, then
		\begin{equation} \label{eq50}
		\sum_{n=1}^\infty P(d_H(S_n,S)>\epsilon)=:A<\infty.
		\end{equation}
		By choosing $x_0\in int(S)$ such that $\mathbb{E}(T^{x_0}_{B})>A+1$, we have $\sum_{n=1}^\infty P(T_B^{x_0})>A$. On the other hand,  for all $n$ we have $\{\omega: T_B^{x_0}>0\}\subset \{\omega: d_H(D_n,S)>\epsilon\},$ contradicting \eqref{eq50}.\\

Understanding the geometric conditions for the domain that imply the non-trap
property for general reflected diffusions remains an interesting research problem.

Also, to obtain rates of convergence for general reflected diffusions remains an open problem. In particular, one needs to extend Proposition \ref{prop:burdzy} in \cite{burdzy:06} to the case of general diffusions with non-uniform stationary distribution.

\section{Some comments on the implementation of the estimators}
\label{coments}
In practice, given a trajectory $\{X_t:0\leq t\leq T\}$ in the plane, we can approximate the $r$-convex hull of the trajectory and the RBM-sausage estimator, by adapting the implementation in the i.i.d. case. The computation of the $r$-convex hull estimator is based on the algorithms presented by \cite{ede:83}. The details of its implementation in R  (\cite{cran}) can be found in \cite{pateiro:10}. Regarding the RBM-sausage estimator, the implementation is based on the  computation of intersections of pairs of balls of radius $\epsilon_T$. We refer to \cite{ede:95} for efficient algorithms on the structure of a union of balls. The code for the computation of the $r$-convex hull of the trajectory and the RBM-sausage estimator is available in a new release of the R package \texttt{alphahull 2.0} that includes specific functions for these methods.

%
%

Next, we present an illustration of the estimation of a non-convex compact set $S\subset\mathbb{R}^2$ from a simulated trajectory of a
planar reflected Brownian motion. An application to real data sets in the context of home-range estimation is discussed in Section \ref{sec:intro}.

Let $C$ be the region delimited by the so-called {\it{crooked egg curve}}. The polar equation of the crooked egg curve is 
\[r=\sin^3(\theta)+\cos^3(\theta).\]
Points $(x,y)$ in the curve satisfy 
$(x^2+y^2)^2-(x^3+y^3)=0$. Let us consider the set $S=C\setminus \mathring{\mathcal{B}}$ with $\mathring{\mathcal{B}}=\mathring{\mathcal{B}}\big((0.05,0.6),0.15\big)$. 
In Figure \ref{fig:rbm2} we have simulated trajectories of a reflected Brownian motion on $S$. The reflection is pushed in the direction of the inward unit normal vector on the boundary. We represent in red, the boundary of the RBM-sausage $D_T$. We analyse the behaviour of $D_T$ with respect to how much the estimation differs when the algorithm is applied to a path from time 0 to $T$, or to the same path from time 0 to $2T$, etc. (short trajectories). The behaviour of the estimator is similar to that of the BBMM represented in green. The results based on the BBMM are provided in the R package BBMM, see \cite{nielson:13}. Since these are simulated data, the choice of the location error is based on visual exploration of the result. We assume that the trajectories are observed at a high sampling rate and the parameter corresponding to the time between successive observations is set small.

\begin{figure}[!h]%
	\centering
	\includegraphics[scale=0.27]{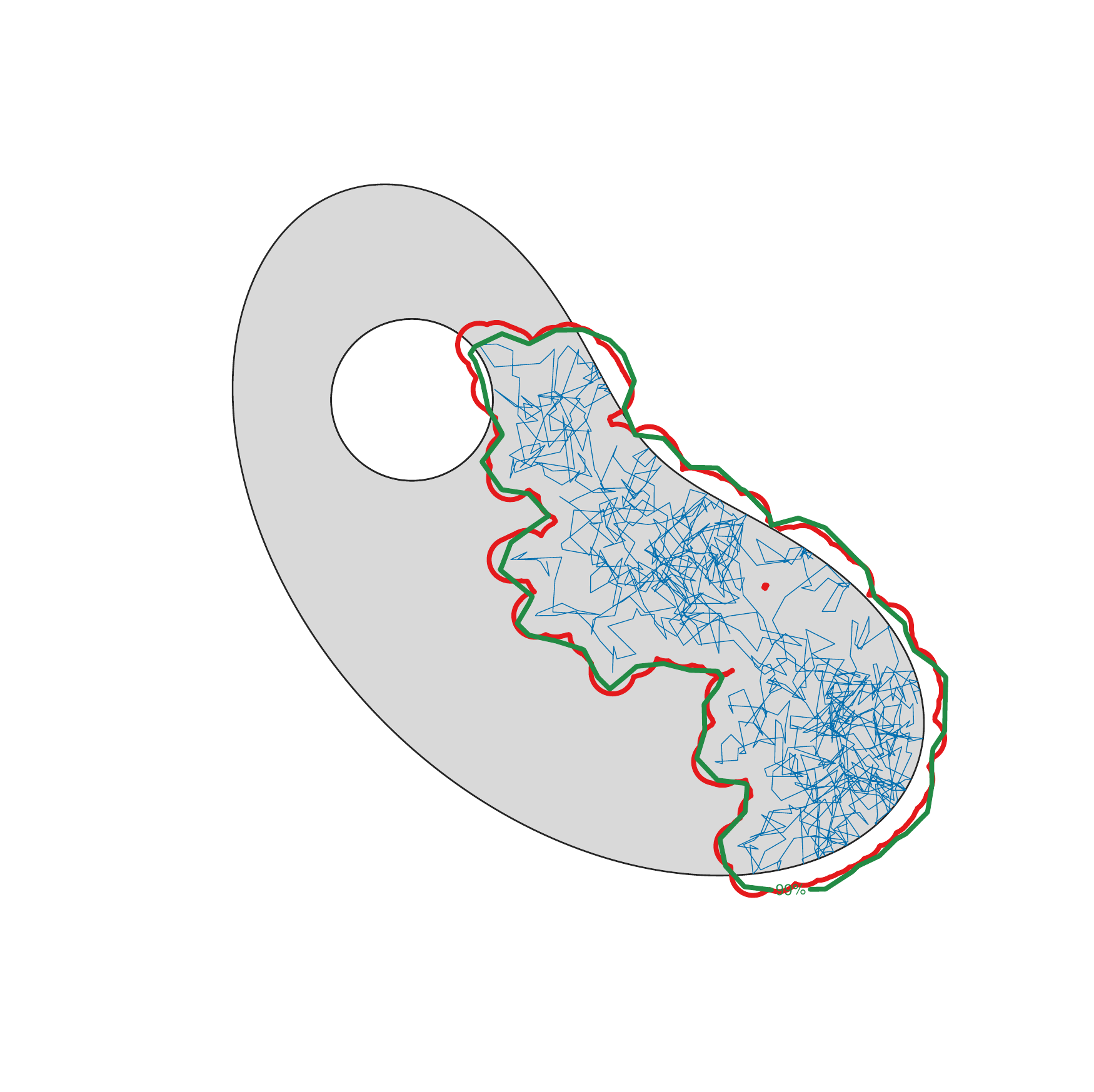}
	\includegraphics[scale=0.27]{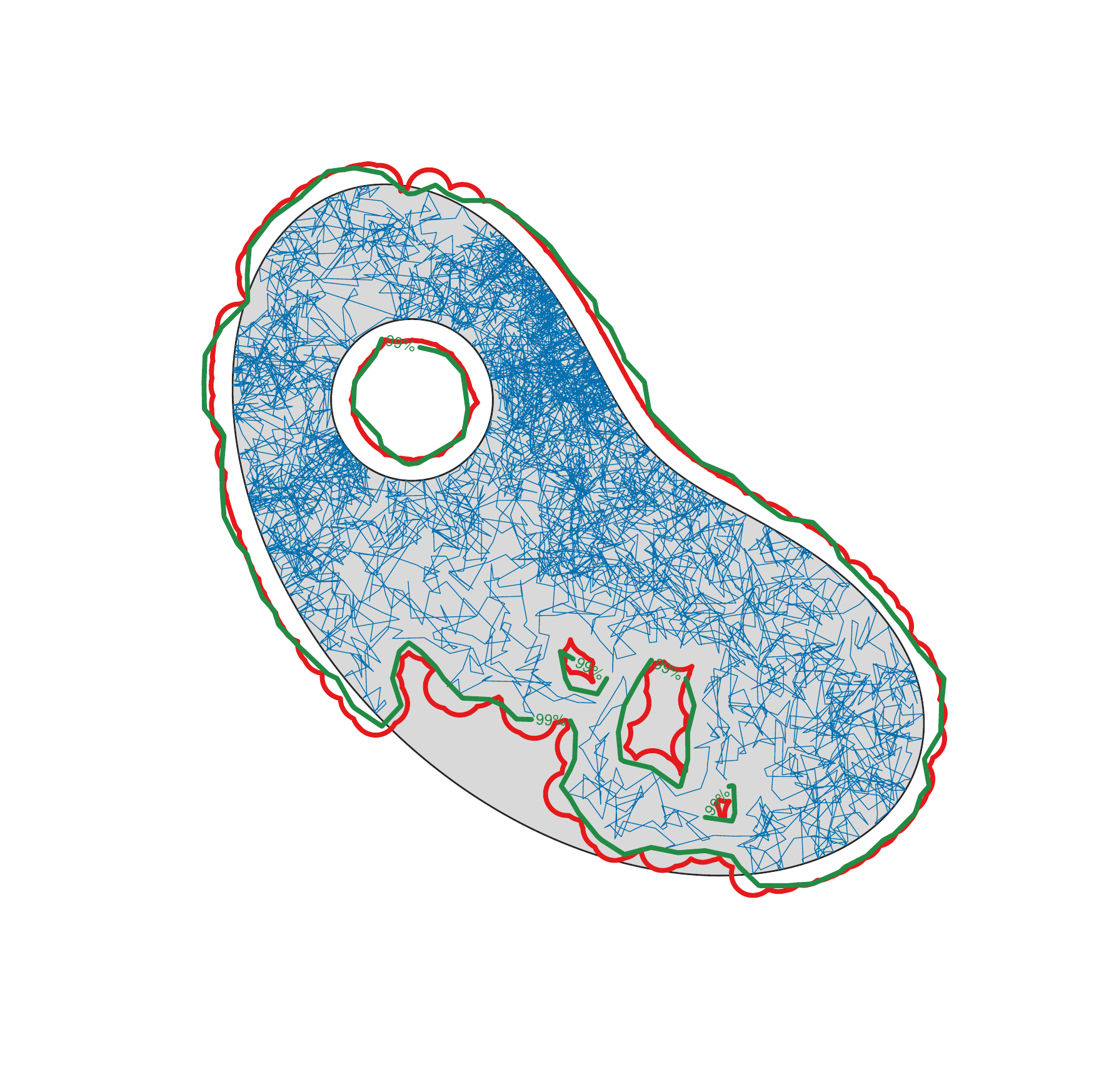}
	\includegraphics[scale=0.27]{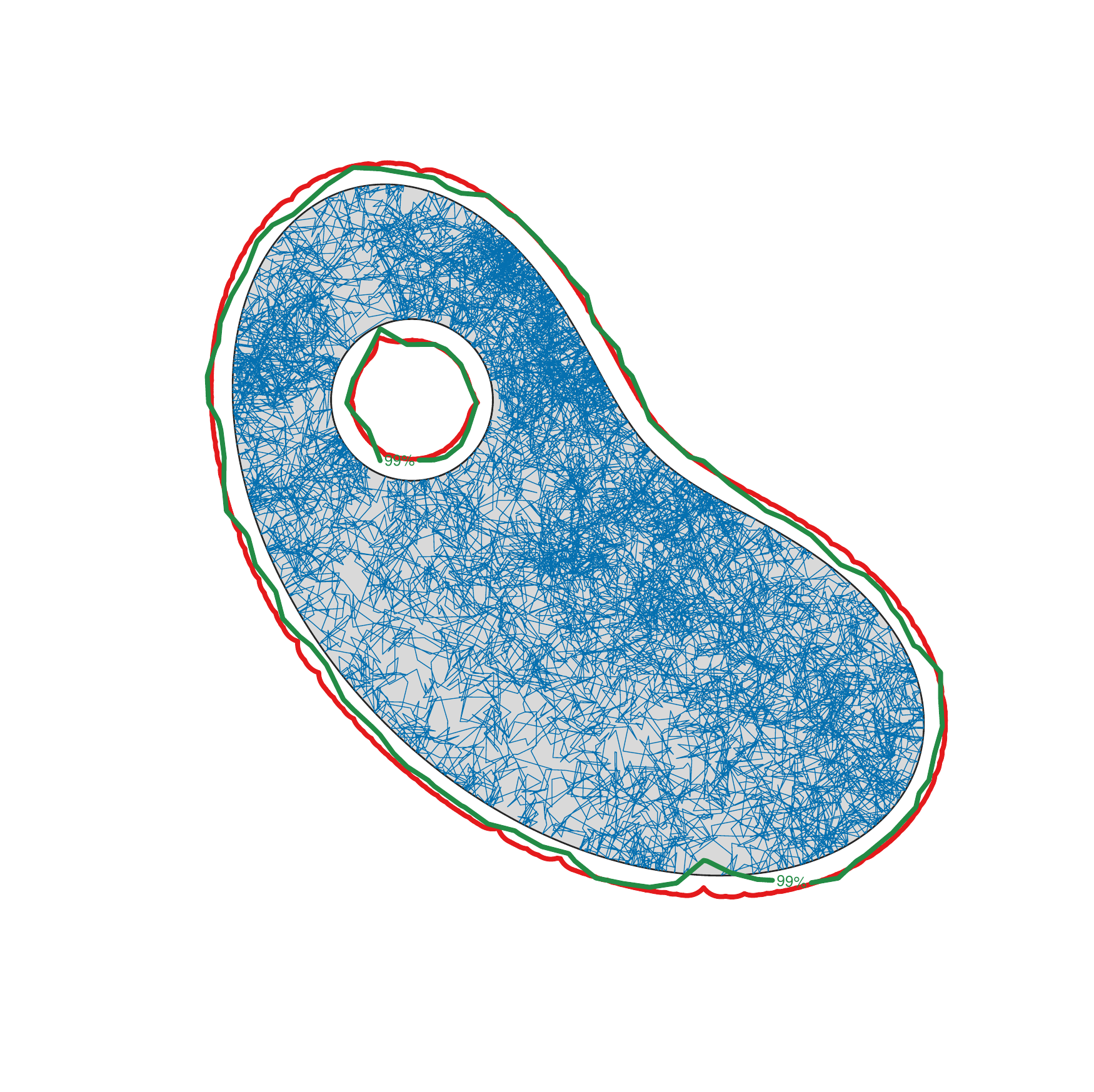}
	\caption{In gray, support limited by a crooked egg curve and a circumference. 
		In blue, path of a simulated reflected Brownian motion.  From left to right, size $N=1000$, $N=5000$ and $N=10000$ (with step $h=0.001$). In red, boundary of the RBM-sausage $D_T$ for $\epsilon_T=0.04$. In green, home-range estimation based on the BBMM (99\% utilization distribution estimated).}%
	\label{fig:rbm2}%
	
\end{figure}

In Figure \ref{fig:rbm3} we show the $r$-convex hull estimator and analyse its behaviour for lower sampling rates. We start with a simulated reflected Brownian motion with size $N=10000$ and step $h=0.001$ (left) and then select $N=2000$ (middle) and $N=500$ (right) equally spaced in time points. We also show the results based on the BBMM, choosing increasing values of the parameter corresponding to the time between successive observations.  As expected, see \cite{horne:07}, the  BBMM identifies the movement path with progressively less confidence as the time interval increases.

\begin{figure}[!h]%
	\centering
	\includegraphics[scale=0.27]{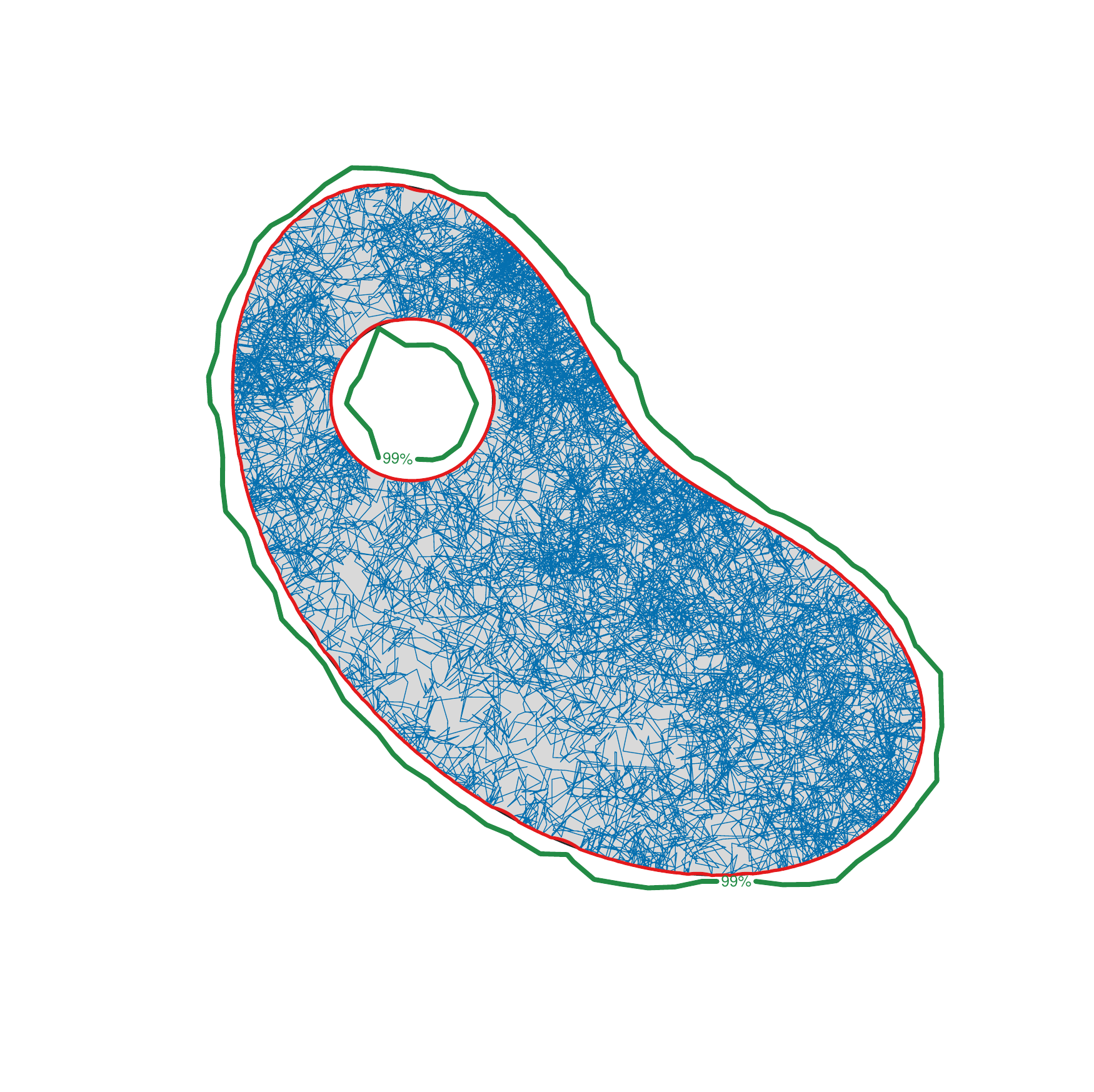}
	\includegraphics[scale=0.27]{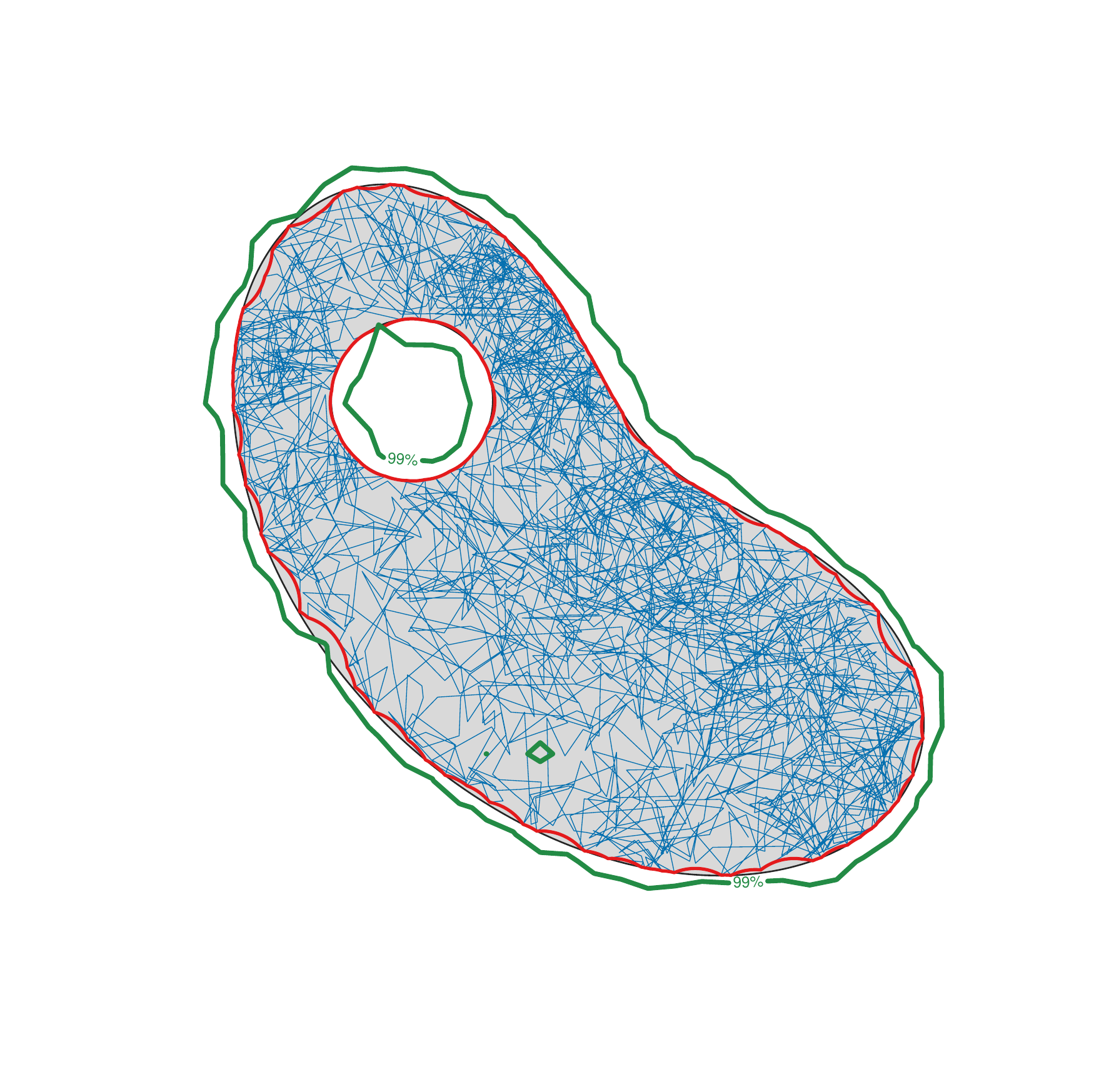}
	\includegraphics[scale=0.27]{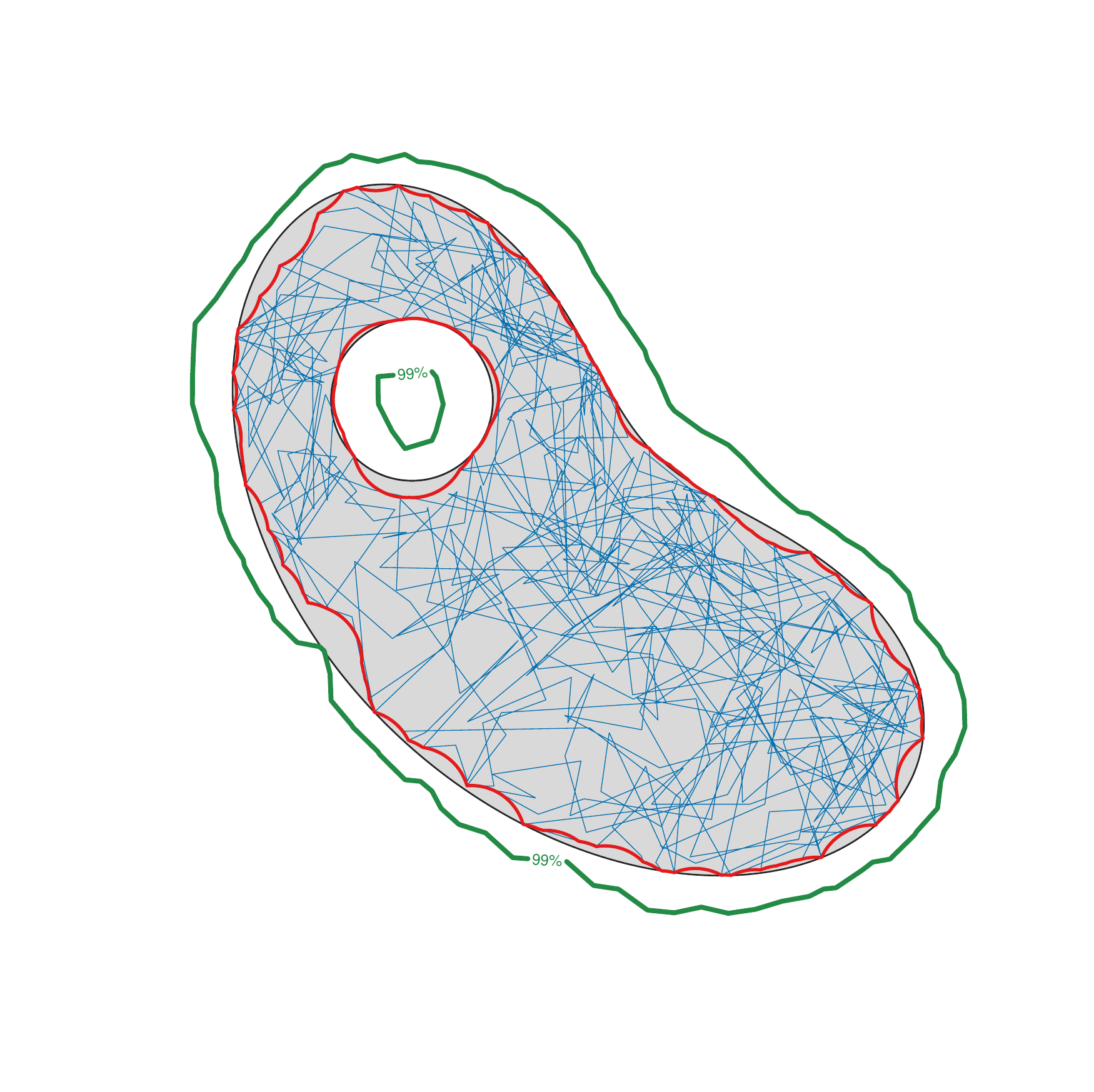}
	\caption{In blue, path of a simulated reflected Brownian motion with size $N=10000$ and step $h=0.001$. In red, boundary of the $r$-convex hull estimator for {\mbox{$r=0.1$}} for the original path (left),  $N=2000$ (middle) and $N=500$ (right) equally spaced in time points. In green, home-range estimation based on the BBMM (99\% utilization distribution estimated).}%
	\label{fig:rbm3}%
	
\end{figure}

\section{Concluding remarks}

 We present a new approach to estimate the support of the stationary distribution of a reflected diffusion based on a single trajectory. We consider two estimators: the $r$-convex hull of the trajectory and the Devroye-Wise estimate. Under the non-trap assumption both estimates are consistent with respect to the Hausdorff distance, as well as the distance in measure. These conditions are necessary in order to obtain complete consistency. We also provide algorithms for efficient computation of the estimates. 

Our main application is the home-range estimation problem. We illustrate the behaviour of our proposal through two real data examples as well as simulated data. We also compare with the BBMM method in two different setups: short trajectories and low sampling rates. The results are encouraging. 

We study in detail the case of the reflected Brownian motion and we obtain rates of convergence with respect to the Hausdorff metric, the distance in measure, and the $L^1$ distance. 


\section*{Acknowledgements} 
The authors would like to thank Dr. Stephen Blake, of the Max Planck Institute for Ornithology, for facilitating access to the data sets discussed in Section \ref{sec:intro}. 
They are also grateful to the reviewers for their valuable comments and suggestions. 
The first and second authors acknowledge financial support from grant CSIC 605/34.
The third author was supported by the Spanish Ministry of Science and Technology grant MTM2012-37195.
The fourth author acknowledges financial support from the Spanish Ministry of Economy and Competitiveness
and ERDF funds (MTM2013-41383-P).

\end{document}